\newcommand{\remark}{\noindent{\it Remark. }}
 \let\be=\beta  
  \let\ga=\gamma 
 \let\la=\lambda  
 \let\vs=\varsigma
\newcommand{\caA}{{\mathcal A}}
\newcommand{\caB}{{\mathcal B}}
\newcommand{\caC}{{\mathcal C}}
\newcommand{\caF}{{\mathcal F}}
\newcommand{\caG}{{\mathcal G}}
\newcommand{\caH}{{\mathcal H}}
\newcommand{\caK}{{\mathcal K}}
\newcommand{\caL}{{\mathcal L}}
\newcommand{\caM}{{\mathcal M}}
\newcommand{\caO}{{\mathcal O}}
\newcommand{\caR}{{\mathcal R}}
\newcommand{\caS}{{\mathcal S}}
\newcommand{\caW}{{\mathcal W}}
\newcommand{\caX}{{\mathcal X}}
\newcommand{\caY}{{\mathcal Y}}
\newcommand{\caZ}{{\mathcal Z}}
\newcommand{\bbC}{{\mathbb C}}
\newcommand{\bbE}{{\mathbb E}}
\newcommand{\bbN}{{\mathbb N}}
\newcommand{\bbP}{{\mathbb P}}
\newcommand{\bbR}{{\mathbb R}}
\newcommand{\bbZ}{{\mathbb Z}}
\newcommand{\mfa}{\mathfrak{a}}
\newcommand{\mfb}{\mathfrak{b}}
\newcommand{\oX}{\overline{X}^{\lambda}}
\newcommand{\rX}{X^{\lambda}}
\newcommand{\oV}{\overline{V}^{\lambda}}
\newcommand{\rV}{V^{\lambda}}
\newcommand{\upl}{^{\lambda}}
\newcommand{\cha}{\mathrm{cha}}
\newcommand{\ie}{{\it i.e.\/} }
\newcommand{\iu}{\mathrm{i}}
\newcommand{\str}{^{*}}
\newcommand{\ep}[1]{\mathrm{e}^{#1}}
\newcommand{\hilb}{\mathcal{H}}
\newcommand{\dd}{\mathrm{d}}
\newcommand{\Tr}{\mathrm{Tr}}
\newcommand{\streep}{ |}
\newcommand{\norm}{ ||}
\newcommand{\beq}{\begin{equation}}
\newcommand{\eeq}{\end{equation}}
\newcommand{\beqn}{\begin{equation*}}
\newcommand{\eeqn}{\end{equation*}}
\newcommand{\baq}{ \begin{eqnarray} }
\newcommand{\eaq}{ \end{eqnarray} }
\newcommand{\mean}[1]{\mathbb{E}\left(#1\right)}
\newtheorem{thm}{Theorem}
\newtheorem{cor}[thm]{Corollary}
\newtheorem{prop}[thm]{Proposition}
\newtheorem{lma}[thm]{Lemma}
\newtheorem{defin}{Definition}
\begin{document}

\title[Disordered wires]{Disordered quantum wires: microscopic origins of the DMPK theory and Ohm's law}

\author[S. Bachmann]{Sven Bachmann}
\address{Department of Mathematics \\
University of California, Davis \\
One Shields Ave \\
Davis, CA 95616, USA}
\email{svenbac@math.ucdavis.edu}

\author[M. Butz]{Maximilian Butz}
\address{Fakult\"at f\"ur Mathematik \\ Technische Universit\"at M\"unchen \\
Boltzmannstr.~3 \\ 
85748 Garching, Germany}
\email{butz@ma.tum.de}

\author[W. de Roeck]{Wojciech de Roeck	}
\address{Physikalisches Institut \\ 
Universit\"at zu K\"oln \\
Z\"ulpicher Str. 77 \\
50937 K\"oln, Germany\\
On leave from University of Heidelberg}
\email{wderoeck@thphys.uni-heidelberg.de}

\date{\today }

\begin{abstract}
We study the electronic transport properties of the Anderson model on a strip, modeling a quasi one-dimensional disordered quantum wire. In the literature, the standard description of such wires is via random matrix theory (RMT). Our objective is to firmly relate this theory to a microscopic model.  We correct and extend previous work \cite{BdR} on the same topic. In particular, we obtain through a physically motivated scaling limit an ensemble of random matrices that is close to, but not identical to the standard transfer matrix ensembles (sometimes called TOE, TUE), corresponding to the Dyson symmetry classes $\be=1,2$.
In the $\beta=2$  class, the resulting conductance is the same as the one from the ideal ensemble, i.e.\ from TUE. In the $\be=1$ class, we find a deviation from TOE. It remains to be seen whether or not this deviation vanishes in a thick-wire limit, which is the experimentally relevant regime. For the ideal ensembles, we also prove Ohm's law for all symmetry classes, making  mathematically precise a moment expansion by Mello and Stone \cite{mellostone}. This proof bypasses the explicit but intricate solution methods that underlie most previous results.
\end{abstract}

\maketitle


\section{Introduction}\label{sec: introduction}

We start  below with a brief introduction to the physics of quasi one-dimensional quantum wires. In Section \ref{sec: models and results}, we sketch the scope of this paper and its relation to previous works.  The appropriate random matrix theory is discussed in Section \ref{sec: DMPK}. Our microscopic model, convergence results and proofs are presented in Sections \ref{sec: micro} and \ref{sec: proofs}. 

\subsection{Phenomenology}\label{sec: phenomenology}

Without yet introducing a concrete mathematical framework, we present the basic physics setup of quantum wires and try to elucidate the questions of charge transport and conductance fluctuations. We rely heavily on the excellent review \cite{beenakkerreview}.  

Disordered quantum wires are quasi one-dimensional pieces of dirty (disordered) conductor.  The wire has a physical length $L$, which is most conveniently expressed in units of the \emph{mean free path} $\ell$ so that we shall use $s:= L/\ell$. In a microscopic model where the parameter $\la\geq0$ measures the strength of the disorder $\ell\sim \la^{-2}$. The width $W$ of the wire is expressed by an integer $N$ that corresponds to the number of different modes that `fit' in the wire. Physically, $N \sim W/\la_F$ with $\la_F$ the Fermi wavelength of the electrons sent through the wire, which is in turn determined by the energy of those incoming electrons.

For a purely one-dimensional wire, $N=1$, it is well-known that an electron travelling through the wire gets localized with localization length of the order of the mean free path $\ell$, hence $s \sim 1$.  However, the localization length increases with $N$ (roughly as $s \sim N$, at least in the weak disorder limit $\la\to 0$) and we can ask how the system behaves for $s \ll N$, before localization sets in. 
There, one can distinguish the \emph{ballistic} regime $s  \leq 1$, where incoming electrons did not yet get scattered by the impurities,  and the most interesting \emph{diffusive} regime characterized by 
\beq \label{def: diffusive regime}
1 \ll s, \qquad   s/ N \ll 1\,.
\eeq
One of the fascinating aspects of this regime is the phenomenon of universal conductance fluctuations (UCF) first discussed in~\cite{LeSt85}.  Let $g= g(s,N, \la)$ be the conductance of the wire, expressed in units of the conductance quantum $2e^2/\hbar$. It is a random quantity due to the disorder.  In the thick wire limit, its disorder average $\bbE(g)$, is roughly given by 
\beq
 \bbE (g)    \sim   \left\{  \begin{array}{lllll}   N/s     &  \qquad &    1 \ll s, \quad   s/ N \ll 1   &  \qquad & \textrm{(Ohm's law)}  \\[2mm]      \exp{\{  -  s /N \} } &  \qquad &   s  > N &  \qquad & \textrm{(localization)}        \end{array} \right. \label{eq: rough conjecture}
\eeq
Furthermore, universal conductance fluctuations mean that, in the diffusive regime defined by \eqref{def: diffusive regime}, 
\beq
  \mathrm{Var}(g)  = 2/(15\beta),
\eeq
independently of the microscopic details of the wire, or its length and width. The only parameter that remains in this regime is the symmetry index $\be$ that refers to Dyson's symmetry classes. 

We emphasize that these phenomena should emerge in a large $N$ limit only. On the other hand, $N$ cannot be too large because then we enter the regime of two-dimensional localization, at least if we assume that the wire has one transverse dimension. However, even if the transverse dimension is higher, the reasoning breaks down as soon as $W > \ell$.  It is therefore important to take a weak-disorder limit first, $\la\to 0$, which also means that the wire's microscopic length $L=\lambda^{-2}s$ diverges.  Below, we try to distill some precise conjectures that are generally accepted.  From the mathematical perspective, they can be partially proven if one accepts RMT as a starting point (see Section \ref{sec: DMPK}), but open if one starts from a more realistic model, as the one treated in Section \ref{sec: micro} of this article.\\

\noindent\textbf{Conjecture 1} (Ohm's law)
\beq
  \lim_{N \to \infty}  \quad   \lim_{\la \to 0}   \quad   \frac{1}{N}\bbE (g)  =  \frac{1}{s}  + o(1/s), \qquad  s \to \infty.
\eeq

\noindent\textbf{Conjecture 2} (Universal conductance fluctuations)
\beq
  \quad \lim_{N \to \infty}  \quad   \lim_{\la \to 0}   \quad     \mathrm{Var}(g)  =  \frac{2}{15 \beta} +o(1), \qquad  s \to \infty.
\eeq

We stress here that these conjectures reflect the minimum of what should be true according to the literature, and that the underlying heuristics is quite involved. The present paper does partially settle theses conjectures starting from a microscopic model but with an additional scaling limit, as will be explained in the next section. 

\subsection{Setup, goals, and results}\label{sec: models and results}

The standard approach to disordered quantum wires is to model the transfer matrix of such a wire by an appropriate ensemble of random matrices. The matrices under consideration belong to a subgroup of pseudo-unitary matrices. Following~\cite{TOETUE} we shall call their ensembles TOE, TUE and TSE in analogy to the better known ensembles of Hamiltonians, the hermitian GOE, GUE and GSE, or the circular ensembles of unitaries: COE, CUE and CSE. In fact, ensembles of transfer matrices come with a real positive parameter, called $s$ above  and physically corresponding to the length of the wire. They are therefore more complicated, but also more interesting: in particular the  parameter $s$ tunes a localization-delocalization transition. This can be observed for example in the Fokker-Planck equation describing the $s$-dependence of the conductance, which is the equation usually referred to as the DMPK equation.

The natural question arises whether the RMT ensembles allow for a verification of the conjectures mentioned at the end of the previous section, with the proviso that the $\la\to 0$ should be omitted as the RMT assumes weak coupling from the start. In the physics literature, there is overwhelming evidence for an affirmative answer, and the conjectures have been verified in \cite{chalkermacedo, beenakkerrejaei, caselledistribution, zirnbauersuperfourier}. In that perspective, we shall here give a rigorous proof of Ohm's law based on a moment argument of~\cite{mellostone}, thereby confirming Conjecture~1 for the TOE, TUE and TSE.

The ultimate goal of our work is a derivation of the conjectures from a more realistic model of the wire, i.e.\ from a `reasonable' microscopic Hamiltonian, namely the Anderson model on a tube of width $N$ with a disordered region of length $L$ and disorder strength $\lambda$. First, we need to be in the weak coupling regime $\la\to0$, and therefore $L=\lambda^{-2}s\to\infty$. This first scaling limit yields a random matrix ensemble $\caG(s)$, see Proposition \ref{Prop:scalingNIsotropic}. For the conjectures to hold, a second scaling is certainly necessary, namely that of a broad wire, $N\to\infty$. At the time of writing, the validity of the conjectures in this scaling regime remains an open question. However, if we consider an additional scaling limit in which the transversal hopping in the wire is small compared to the longitudinal hopping, see Theorem~\ref{thm:convA}, we obtain, instead of the ensemble $\caG(s)$, a new transfer matrix ensemble $\caA(s)$ that is very close to the ideal ensemble. In fact, for $\be=2$, the conductance calculated from that ensemble is the same as that calculated from the TUE. Since we proved Ohm's law for the $N\to\infty$ limit of the random matrix ensemble in the first place, this provides a proof of Conjecture~1 in a weaker sense for $\beta=2$. In Section~\ref{sec: discussion}, we comment on the ensemble $\caA(s)$, pointing out to how and why it fails to satisfy all the symmetry properties of the ideal ensembles. 

This article is to a large extent based on a previous paper \cite{BdR} by two of us, which appeared on the arXiv shortly after and independently of \cite{viragvalkoexplosion}. Despite their similarity these two articles stressed different aspects of the resulting transfer matrix evolutions. However, \cite{BdR} contained an error, as pointed out by the second author of the present paper, and the symmetry properties of the model were not consistently treated. In this article, which supersedes~\cite{BdR}, we first extend the setup by constructing models for both $\be=1$ and $\be=2$ symmetry classes\footnote{The physically most natural way to discuss $\be=4$ as well would be to consider electrons with spin, which we chose not to do for reasons of simplicity}.
 Moreover, we incorporate technical improvements (among other things borrowing some terminology from \cite{viragvalkoexplosion}), mostly concerning the statement of the joint scaling limit in Theorem \ref{thm:convA}. Finally, we study the convergence as $N\to\infty$ of a hierarchy of equations for the moments of the conductance introduced by~\cite{mellostone}. We prove that the limit satisfies Ohm's law, see Theorem~\ref{thm:Ohm}.


\section{Random matrix theory: the DMPK equation}\label{sec: DMPK}


Transport properties of a quasi one-dimensional system are most conveniently approached through its scattering matrix, or equivalently its transfer matrix. In this section we shall consider these objects as the fundamental quantities of the theory, understand what symmetries imply on their general structure and derive a stochastic differential equation describing their behavior as a function of the length of the disordered wires, based on an isotropy assumption, also called `(local) maximal entropy' Ansatz. In particular, we do not assume that the transfer matrices here arise from some sort of microscopic Hamiltonian dynamics.


\subsection{Transfer matrices and symmetries}\label{sub: Transfer}

Heuristically speaking the transfer matrix of a quasi one-dimensional wire maps free waves on the far right of the sample to free waves on the far left of it. Although this picture is physically meaningful, we shall only refer to it explicitly in Section~\ref{sec: micro} and keep an abstract point of view here. We first fix a preferred basis in $\bbC^{2N}$ and make the following definition.
\begin{defin} \label{Def:Transfer}
A \emph{transfer matrix} for a wire of width $N$ is a $2N\times 2N$ pseudo-unitary matrix,
\begin{equation}\label{PseudoUnitary}
\caM\str \Sigma_z \caM = \Sigma_z\,,\qquad\text{where}\qquad 
\Sigma_z = \begin{pmatrix} 1 & 0 \\ 0 & -1\end{pmatrix}\,.
\end{equation}
Furthermore, a transfer matrix $\caM$ is \emph{time reversal invariant} if
\begin{equation}\label{M_TRI}
\Sigma_x \caM \Sigma_x = \overline{\caM}\,,\qquad\text{where}\qquad 
\Sigma_x = \begin{pmatrix} 0 & 1 \\ 1 & 0 \end{pmatrix}\,,
\end{equation}
and $\overline{\caM}$ is the complex conjugate of $\caM$.
\end{defin}
Transfer matrices have a simple multiplicative composition rule. If $\caM_1$ and $\caM_2$ are transfer matrices for two wires, then $\caM_2\caM_1$ is the transfer matrix for the composite system obtained from gluing the two pieces in series.

In view of~(\ref{PseudoUnitary}, \ref{M_TRI}), it is natural to write a transfer matrix in block form
\begin{equation*}
\caM = \begin{pmatrix} \caM_{++} & \caM_{+-} \\ \caM_{-+} & \caM_{--} \end{pmatrix}\,.
\end{equation*}
Combining~(\ref{PseudoUnitary}) and the singular value decompositions $\caM_{++} = U_+ S_+ V_+$ and $\caM_{--} = U_- S_- V_-$ of the diagonal blocks, we obtain the following factorization
\begin{equation}\label{MSdecomp}
\caM = \begin{pmatrix} U_+ & 0 \\ 0 &  U_- \end{pmatrix} \begin{pmatrix} S & (S^2-1)^{1/2} \\ (S^2-1)^{1/2} &  S \end{pmatrix} \begin{pmatrix} V_+ & 0 \\ 0 &  V_- \end{pmatrix}\,.
\end{equation}
where $S=S_+=S_-$. If, moreover, time reversal invariance is imposed, then $U_- = \overline{U_+}$ and $V_- = \overline{V_+}$. 

Let $r$ be the reflection and $t$ be the transmission matrices, defined through
\begin{equation*}
\caM \begin{pmatrix} 1 \\ r \end{pmatrix} = \begin{pmatrix} t \\ 0 \end{pmatrix} 
\end{equation*}
In particular,
\begin{equation*}
t = \caM_{++}-\caM_{+-}\caM_{--}^{-1}\caM_{-+}= \left(\caM_{++}\str\right)^{-1}\,,
\end{equation*}
where we used~(\ref{PseudoUnitary}) in the second equality. The so-called transmission eigenvalues $(T_k)_{k=1}^N$ are defined as the eigenvalues of the matrix $t\str t$, i.e.\ of $(\caM_{++}\str\caM_{++})^{-1}$. Hence, the transmission eigenvalues are also the inverses of the squares of the singular values contained in $S$. Let $T = S^{-2}$ be the diagonal matrix of transmission eigenvalues. Many  transport properties of the disordered wire can be expressed as functions thereof. In particular, the conductance $g$ is given in units of $2e^2 / \hbar$ by the Landauer-B\"uttiker formula~\cite{LB},
\begin{equation*}
g:= \sum_{i=1}^N T_i = \Tr\, T = \Tr\, t\str t\,,
\end{equation*}
a formula that we accept here as a definition of $g$.


\subsection{The DMPK Theory}\label{sub: DMPK}

The DMPK theory introduced by~\cite{D} and independently by~\cite{MPK} is an evolution equation for the transfer matrix $\caM(r,s)$ of a wire on $[r,s]$. By the composition rule, for any $s_1\leq s_2$,
\beq
\caM(0,s_2) =\caM(s_1,s_2) \caM(0,s_1)\,,
\eeq
with $\caM(s,s)=1$. The first crucial idea is to take $s_2-s_1$ infinitesimal and write 
\beq
\caM(s,s+\dd s) \sim 1 + \dd  \caL(s)
\eeq
such that $\dd  \caL(s)$ is independent of $\caM(s)$ and contains only diffusive terms but no drift.  Mathematically, this translates into the assumption that $\caM(s)$ satisfies an It\^o stochastic differential equation  (SDE)
\begin{equation}
\label{DMPK}
 \begin{split}
  \dd  \caM(s) &= \dd  \caL(s) \caM(s)\,,\\
\caM(0)&=1\,,
 \end{split}
\end{equation}
where $\caL(s)$ is a matrix valued Brownian motion and $\caL(0) = 0$.
\begin{lma}
Let $\caM(s)$ be a solution of the SDE~(\ref{DMPK}). Assume that
\begin{align}
\dd  \caL\str\Sigma_z + \Sigma_z \dd  \caL &= 0\,, \label{PseudoU1}\\
\dd \caL\str\Sigma_z \dd  \caL &= 0\,. \label{PseudoU2}
\end{align}
Then $\caM(s)$ is pseudo unitary, eq.~(\ref{PseudoUnitary}). If moreover
\begin{equation}\label{L_TRI}
\Sigma_x \dd  \caL \Sigma_x = \overline{\dd \caL}\,,
\end{equation}
then $\caM(s)$ is also time reversal invariant, eq.~(\ref{M_TRI}).
\end{lma}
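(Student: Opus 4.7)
The plan is to apply Itô's formula to the matrix-valued process $\caM(s)\str \Sigma_z \caM(s)$ and check that, under the hypotheses, its stochastic differential vanishes identically. Writing $\dd\caM = \dd\caL\, \caM$ and $\dd\caM\str = \caM\str \dd\caL\str$, the Itô product rule gives
\begin{equation*}
\dd\bigl(\caM\str \Sigma_z \caM\bigr) = \caM\str \bigl( \dd\caL\str \Sigma_z + \Sigma_z \dd\caL + \dd\caL\str \Sigma_z \dd\caL \bigr) \caM,
\end{equation*}
where the last summand is the Itô correction arising from the quadratic covariation of $\caM\str$ and $\caM$. Assumption~(\ref{PseudoU1}) kills the first two terms, while~(\ref{PseudoU2}) kills the third. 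Since $\caM(0)\str \Sigma_z \caM(0) = \Sigma_z$, integrating yields~(\ref{PseudoUnitary}) for all $s$.

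For the second statement, the idea is to set $\caN(s) := \Sigma_x \caM(s) \Sigma_x$ and show that $\caN$ and $\overline{\caM}$ solve the same linear SDE with the same initial condition, so that uniqueness forces $\caN = \overline{\caM}$. Indeed, conjugating the SDE~(\ref{DMPK}) by $\Sigma_x$ (and using $\Sigma_x^2 = 1$) gives
\begin{equation*}
\dd\caN(s) = \bigl(\Sigma_x \dd\caL(s) \Sigma_x\bigr)\, \caN(s),
\end{equation*}
and by hypothesis~(\ref{L_TRI}) the driving noise equals $\overline{\dd\caL}$, which is exactly the noise driving $\overline{\caM}$. Both processes start at the identity, so they coincide.

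Two small points deserve care. First, Itô's formula for matrix-valued Brownian motion must be applied component-wise, and the covariation term $\dd\caL\str \Sigma_z \dd\caL$ should be understood as the contraction of the cross-variation of the entries of $\caM\str$ and $\caM$, which is exactly what assumption~(\ref{PseudoU2}) is designed to control. Second, complex conjugation of an Itô SDE preserves its Itô form because conjugation commutes with taking the covariation, so the SDE for $\overline{\caM}$ is obtained just by replacing the noise $\dd\caL$ by $\overline{\dd\caL}$ with no extra drift correction. With these observations the argument is purely algebraic, and the only conceptual step is the invocation of uniqueness for the linear matrix SDE, which is standard. I expect no genuine obstacle; the main subtlety is bookkeeping to ensure that the quadratic covariation terms are handled consistently in the matrix setting.
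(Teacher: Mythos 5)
Your argument is correct and follows essentially the same route as the paper: Itô's product rule applied to $\caM\str\Sigma_z\caM$ yields exactly the displayed expression, and assumptions~(\ref{PseudoU1})--(\ref{PseudoU2}) annihilate it. For the time-reversal part, your uniqueness argument for the linear SDE is a slightly more explicit rendering of the paper's terse ``differentiate~(\ref{M_TRI})'' step, but it is the same idea and is sound.
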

\begin{proof}
For the first part, we take the differential of~(\ref{PseudoUnitary}), use~(\ref{DMPK}) and It\^o calculus to obtain
\begin{equation*}
\caM\str\left(\dd \caL\str\Sigma_z + \Sigma_z \dd \caL + \dd \caL\str\Sigma_z \dd \caL\right)\caM = 0\,,
\end{equation*}
which holds if and only if both~(\ref{PseudoU1}) and~(\ref{PseudoU2}) hold as $\caM$ is nonsingular. Similarly, the differential of~(\ref{M_TRI}) immediately yields~(\ref{L_TRI}).
\end{proof}

Secondly, the DMPK theory prescribes a particular invariance of the distribution of $\dd \caL(s)$.  The law of the increments $\dd \caL(s)$ shall be independent of $s$ and maximally isotropic in the sense that
\begin{equation}\label{dLEntropy}
\caW\str \dd \caL\, \caW \mathop{=}\limits^{d} \dd \caL\qquad\text{for any unitary}\qquad \caW = \begin{pmatrix} W_+ & 0 \\ 0 & W_- \end{pmatrix}\,.
\end{equation}
The unitary blocks $W_\pm$ are independent of each other if $\caM$ does not exhibit any symmetry, whereas $W_- = \overline{W_+}$ if time reversal symmetry is imposed. For notational simplicity, we cast $\caL$ in block form,
\begin{equation*}
\caL(s) = \begin{pmatrix} \mathfrak{a}(s) & \mathfrak{b}(s) \\ \mathfrak{b}(s)\str & \mathfrak{a}'(s) \end{pmatrix}
\end{equation*}
where $\mathfrak{a}(s),\mathfrak{a}'(s), \mathfrak{b}(s)$ are independent local martingales, with $\mathfrak{a}(s) = -\mathfrak{a}(s)\str$, similarly for $\mathfrak{a}'(s)$, and
\begin{equation}\label{BlockU}
\dd \mathfrak{a}\str \dd \mathfrak{a} = \dd \mathfrak{b}\dd \mathfrak{b}\str = \dd \mathfrak{a}'{\dd \mathfrak{a}'}\str\,.
\end{equation}
The isotropy assumption~(\ref{dLEntropy}) reduces to invariance conditions on the blocks. First, 
\begin{equation} \label{DMPKa}
\mathfrak{a}_{ij}(s) = \begin{cases}
1 / \sqrt{2N}  \cdot (B^R_{ij}(s) + \iu B^I_{ij}(s)) & 1 \leq i < j \leq N \\ 
\iu / \sqrt{N} \cdot B^I_{ii}(s) & i=j \\ 
-\overline{\mathfrak{a}_{ji}(s)} & \text{otherwise}
\end{cases}\,,
\end{equation}
where $B^R$ and $B^I$ are independent real standard Brownian motions, and similarly but independently for $\mathfrak{a}'(s)$. Secondly, 
\begin{equation} \label{DMPKb}
\mathfrak{b}_{ij}(s) = 1 / \sqrt{2N}  \cdot (\tilde B^R_{ij}(s) + \iu \tilde B^I_{ij}(s))\,,\quad\text{for all }i,j\,.
\end{equation}
Note that the relative normalization of $\mathfrak{a}(s)$ and $\mathfrak{b}(s)$ are fixed by pseudounitarity, i.e.\ (\ref{BlockU}). In the time reversal invariant case, the matrix $\mathfrak{a}(s)$ does not change, but
\begin{equation*}
\mathfrak{a}'(s) = \overline{\mathfrak{a}(s)}\,,
\end{equation*}
and $\mathfrak{b}(s)$ becomes symmetric, $\mathfrak{b}(s)\str = \overline{\mathfrak{b}(s)}$ with real and imaginary parts orthogonally invariant, namely
\begin{equation} \label{DMPKbTRI}
\mathfrak{b}_{ij}(s) = \begin{cases}
1 / \sqrt{2(N+1)}  \cdot (B^R_{ij}(s) + \iu B^I_{ij}(s)) & 1 \leq i < j \leq N \\ 
1/\sqrt{N+1} \cdot (B^R_{ii}(s) + \iu B^I_{ii}(s)) & i=j \\ 
\mathfrak{b}_{ji}(s) & \text{otherwise}
\end{cases}\,.
\end{equation}
Here again, the relative factor $\sqrt{N/(N+1)}$ is imposed by~(\ref{BlockU}).

From a physical point of view, the DMPK theory's interest lies in its predictions for the statistics of  the transmission eigenvalues. Indeed, the unitary invariance of the increments $\dd  \caL$ implies that the set of  $T_k$ satisfies an autonomous equation, which can be formally derived by It\^o calculus from the matrix SDE~(\ref{DMPK}):
\begin{equation}
\begin{split}
\label{dmpkeq}
 \dd T_k(s)&=v_k(T(s))\dd s+D_k(T(s))\dd B_k(s),\\
T_k(0)&=1,
\end{split}
\end{equation}
for all $k=1,...,N$. The Brownian motions $B_k$ are independent, and the drift and diffusion coefficients are given explicitly by
\begin{equation*}
\label{driftdiff}
 \begin{split}
  v_k&=-T_k+\frac{2T_k}{\beta N+2-\beta}\left(1-T_k+\frac{\beta}{2}\sum_{j\neq k}\frac{T_k+T_j-2T_kT_j}{T_k-T_j}\right)\,,\\
D_k&=\sqrt{4\frac{T_k^2(1-T_k)}{\beta N+2-\beta}}.
 \end{split}
\end{equation*}
The first term in the drift $v_k$ contracts all transmission eigenvalues towards $0$ as the length $s$ of the wire increases. However, and similarly to Dyson's Brownian motion, the drift also contains repulsion terms originating from second order perturbation theory. As a consequence, the eigenvalues $T_k$ `try to avoid' degeneracy. What makes a naive derivation formal is that It\^o's formula is only applicable if the denominator $T_k-T_j$ never becomes singular, \ie{}$\caM^*_{++}(s)\caM_{++}(s)$ never has degenerate eigenvalues. This is a nontrivial property for $s>0$, and even more so as $s\to0^+$ since~(\ref{DMPK}) starts with the completely degenerate $\caM(0)=1$. Both issues can however be tackled and the SDE~(\ref{dmpkeq}) has a unique weak and strong solution, see~\cite{Butzpaper}.

Finally, let us comment on some deeper principles underlying the process $\caM$ and the resulting DMPK equation. 
The maximal isotropy assumption that was used above, can be derived from a simple `maximal entropy assumption' on the set of infinitesimal transfer matrices $1+\dd \caL(s)$ that have a fixed 'scattering strength' $\sum_k T_k$.
Alternatively, as remarked by \cite{huffmann}, one can also guess the DMPK equation from geometric considerations, since it is the radial part of the canonical Brownian motion on a certain symmetric space. The reduction from Lie group to symmetric space is obtained by identifying certain transfer matrices that, in particular, have the same transmission eigenvalues.
This geometric approach was very fruitful. For example, in  \cite{brouwernonuniversality} it was shown how it naturally explains the appearance of non-universal conductance properties in wires with off-diagonal disorder.  


\subsection{Ohm's law}\label{sub: Ohm}

In the context of the DMPK theory, a treatment, or even proof, of  the conjectures mentioned in the introduction is possible, as already indicated in Section \ref{sec: models and results}.   The existing approaches rely on explicit calculations and are quite intricate.
Nevertheless, if one is solely after Ohm's law (and not the universal conductance fluctuations (UCF)),  there is an appealing and compact approach by \cite{mellostone}. Below we present a rigorous version of this approach.

The following theorem shows that in the large $N$ limit, the rescaled moments of the conductance have an Ohmic behavior. In particular, Conjecture~1 holds for the TOE, TUE and TSE. We note that the symmetry index $\beta$ drops out in that particular scaling.
\begin{thm}[Ohm's law] \label{thm:Ohm}
Let $\left(T_k(s)\right)_{k=1}^N$ be the solution of the DMPK process~(\ref{dmpkeq}), and let
\begin{equation*}
g_N(s) = \sum_{k=1}^N T_k(s)\,.
\end{equation*}
Then for all $p\geq 1$ and $T>0$, 
\begin{equation} \label{eq:Ohm}
\lim_{N\to\infty}\frac{\bbE(g^p_N(s))}{N^p} = \frac{1}{(1+s)^p}
\end{equation}
uniformly for $s\in[0,T]$.
\end{thm}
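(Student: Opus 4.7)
The plan is to derive a moment hierarchy from the DMPK SDE, pass to the large-$N$ limit via compactness, and identify the limit by a uniqueness argument. To start, apply It\^o's formula to $g_N^p=(\sum_k T_k)^p$ using the process~\eqref{dmpkeq}. The apparently singular drift contribution $\sum_k v_k$ contains the factor $(T_k+T_j-2T_kT_j)/(T_k-T_j)$, but symmetrizing in $k\leftrightarrow j$ collapses it into a polynomial in the $T_k$. Together with $\sum_k D_k^2=4(\Tr T^2-\Tr T^3)/(\beta N+2-\beta)$ and the crude bound $\Tr T^m\le N$, division by $N^p$ yields
\[
\frac{d}{ds}\mu_p^{(N)}(s)=-p\,\mu_{p+1}^{(N)}(s)+R_p^{(N)}(s),\qquad\mu_p^{(N)}(0)=1,
\]
where $\mu_p^{(N)}(s):=\bbE(g_N^p(s))/N^p\in[0,1]$ and $|R_p^{(N)}(s)|\le C_p/N$ uniformly in $s\in[0,T]$.

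Since $\mu_p^{(N)}\in[0,1]$ and the preceding identity bounds its derivative uniformly in $N$, each family $(\mu_p^{(N)})_N$ is equi-Lipschitz on $[0,T]$. An Arzel\`a--Ascoli plus diagonal extraction argument then yields a subsequence $N_k\to\infty$ along which $\mu_p^{(N_k)}\to\nu_p$ uniformly on $[0,T]$ for every $p$, with $0\le\nu_p\le 1$. Passing to the limit in the integrated ODE produces the limit hierarchy
\[
\nu_p(s)=1-p\int_0^s\nu_{p+1}(u)\,du,\qquad p\ge 1,
\]
which is solved by $\nu_p(s)=(1+s)^{-p}$. It thus remains to show that this bounded solution is unique.

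Proving uniqueness is the main technical obstacle, since the hierarchy is infinite and every equation couples to the next moment. The plan is to work with the generating function $\Phi(\xi,s):=\sum_{p\ge 0}(-\xi)^p\nu_p(s)/p!$, which is entire in $\xi$ because $|\nu_p|\le 1$ and which, by termwise differentiation, satisfies the degenerate parabolic PDE
\[
\partial_s\Phi(\xi,s)=\xi\,\partial_\xi^2\Phi(\xi,s),\qquad\Phi(\xi,0)=e^{-\xi},
\]
for which $e^{-\xi/(1+s)}$ is directly checked to be a solution. Uniqueness in the relevant class then follows from the Hausdorff moment problem: each $(\mu_p^{(N)}(s))_p$ is the moment sequence of the law of $g_N(s)/N$ on $[0,1]$, so by tightness $(\nu_p(s))_p$ is the moment sequence of a weak limit $\rho_s$, and $\Phi(\cdot,s)$ is its Laplace transform, uniquely determined by those moments. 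Equivalently, iterating $\nu_p'=-p\nu_{p+1}$ gives the a priori bound $|\nu_p^{(n)}(s)|\le (p+n-1)!/(p-1)!$, which makes $\nu_p$ real-analytic on $[0,T]$ with radius of convergence at least $1$ at every point; the values $\nu_p^{(n)}(0)=(-1)^n(p+n-1)!/(p-1)!$ then identify $\nu_p$ with $(1+s)^{-p}$ on $[0,1)$, and the identity theorem for real-analytic functions extends this equality to all of $[0,T]$. Since every convergent subsequence of $(\mu_p^{(N)})_N$ has the same limit, the full sequence converges uniformly on $[0,T]$ to $(1+s)^{-p}$, which is precisely~\eqref{eq:Ohm}.
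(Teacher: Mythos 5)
Your proposal is correct and follows essentially the same route as the paper: It\^o's formula yields the moment hierarchy with $\caO(1/N)$ error terms, Arzel\`a--Ascoli plus diagonal extraction gives subsequential limits satisfying $\nu_p'=-p\,\nu_{p+1}$ with $\nu_p(0)=1$, and uniqueness is settled by iterating the hierarchy and killing the Taylor remainder via $|\nu_p^{(n)}|\le (p+n-1)!/(p-1)!$ --- which is exactly the paper's induction-on-interval-length argument phrased as real-analyticity plus the identity theorem. One caveat: the generating-function/Hausdorff-moment aside does not by itself close the uniqueness step (it presupposes the moments it is meant to determine, and no uniqueness class for the degenerate PDE $\partial_s\Phi=\xi\,\partial_\xi^2\Phi$ is supplied), but this is harmless since your ``equivalently'' Taylor-remainder argument is complete and carries the proof.
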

The proof goes through a sequence of lemmas. 
\begin{lma}\label{lma:hierarchy}
Let $g_N^{(j)} = \sum_k T_k^j$ for $j>1$. Then, for any $p\geq 1$, 
\begin{equation} \label{hierarchy}
\frac{\dd }{\dd s}\bbE(g_N^p) = -p\gamma_N(\beta)\left[
\bbE(g_N^{p+1}) - \left(1-\frac{2}{\beta}\right)\bbE(g_N^{p-1} g_N^{(2)}) -\frac{2(p-1)}{\beta}\bbE(g_N^{p-2} (g_N^{(2)}-g_N^{(3)}))
\right].
\end{equation}
where
\begin{equation*}
\gamma_N(\beta) = \frac{\beta}{\beta N +2 -\beta}.
\end{equation*}
\end{lma}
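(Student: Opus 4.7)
The strategy is a direct application of It\^o's formula to the function $f(T_1,\dots,T_N) = g_N^p$ viewed as a smooth function on the (open) region where the $T_k$ are pairwise distinct. Because $f$ depends on the $T_k$ only through the symmetric combination $g_N = \sum_k T_k$, we have $\partial_k f = p\, g_N^{p-1}$ and $\partial_k^2 f = p(p-1)\, g_N^{p-2}$, independently of $k$. It\^o's formula therefore gives
\begin{equation*}
d(g_N^p) = p\, g_N^{p-1} \sum_k v_k(T)\, ds \;+\; \frac{p(p-1)}{2}\, g_N^{p-2}\sum_k D_k(T)^2\, ds \;+\; d M(s),
\end{equation*}
where $M(s)$ is a local martingale. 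Taking expectation and using the fact that all $T_k \in [0,1]$ — hence every polynomial in the $T_k$ and every bracket process above is uniformly bounded — the martingale contribution vanishes and one obtains
\begin{equation*}
\frac{d}{ds}\bbE(g_N^p) = p\,\bbE\bigl(g_N^{p-1} V_N\bigr) + \frac{p(p-1)}{2}\,\bbE\bigl(g_N^{p-2} Q_N\bigr),
\end{equation*}
with $V_N := \sum_k v_k$ and $Q_N := \sum_k D_k^2$. The (formal) applicability of It\^o's formula on $[0,s]$ despite the singular denominators $T_k-T_j$ in $v_k$ is precisely what is guaranteed by the strong solution theory of~\cite{Butzpaper}, which is the technical point one needs to invoke rather than reprove.

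The main work is then purely algebraic, evaluating $V_N$ and $Q_N$ in terms of $g_N$, $g_N^{(2)}$, $g_N^{(3)}$. For the diffusion term this is immediate: $D_k^2 = \tfrac{4\gamma_N(\beta)}{\beta} T_k^2(1-T_k)$, so $Q_N = \tfrac{4\gamma_N(\beta)}{\beta}\bigl(g_N^{(2)} - g_N^{(3)}\bigr)$. For the drift, one must handle the Vandermonde-type double sum
\begin{equation*}
S := \sum_{k\neq j} \frac{T_k\bigl(T_k + T_j - 2 T_k T_j\bigr)}{T_k - T_j}.
\end{equation*}
The key step is to symmetrize by swapping the (dummy) indices $k\leftrightarrow j$ in half of the summand: since $S$ equals the same sum with $k$ and $j$ exchanged, $2S$ equals the sum of the two expressions, whose numerator factorizes as $(T_k-T_j)(T_k+T_j-2T_kT_j)$. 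The denominator cancels, leaving a polynomial sum that evaluates to $(N-1)g_N - g_N^2 + g_N^{(2)}$. Substituting this back into $\sum_k v_k$, using $\gamma_N(\beta) = \beta/(\beta N + 2 - \beta)$ to collect the $g_N$ terms through the identity $-1 + \gamma_N(\beta)(N-1)/1 \cdot \text{(appropriate factor)} = -2\gamma_N(\beta)/\beta$, one obtains
\begin{equation*}
V_N = -\gamma_N(\beta)\Bigl[g_N^2 - \Bigl(1-\tfrac{2}{\beta}\Bigr) g_N^{(2)}\Bigr].
\end{equation*}
Plugging $V_N$ and $Q_N$ into the display above and factoring out $-p\,\gamma_N(\beta)$ reproduces~\eqref{hierarchy} exactly.

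The main obstacle is the symmetrization cancellation producing the clean polynomial expression for $S$; every other step is bookkeeping (and a check that the coefficient of $g_N$ in $V_N$ actually drops out, which is what makes the right-hand side of~\eqref{hierarchy} contain only $g_N^{p+1}$, $g_N^{p-1}g_N^{(2)}$ and $g_N^{p-2}(g_N^{(2)}-g_N^{(3)})$ and no $g_N^p$ term). Once the lemma is established, Theorem~\ref{thm:Ohm} follows from it by rescaling $g_N/N$, observing that $g_N^{(j)}/N$ remains $O(1)$ because $T_k\in[0,1]$, and controlling the large-$N$ limit of the hierarchy — but that is a separate matter.
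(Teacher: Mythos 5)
Your proposal is correct and follows essentially the same route as the paper: It\^o's formula reduces everything to computing $\sum_k v_k$ and $\sum_k D_k^2$, and your symmetrization $k\leftrightarrow j$ of the Vandermonde-type sum is exactly the identity $\sum_k T_k\sum_{j\neq k}(T_k+T_j-2T_kT_j)/(T_k-T_j)=\sum_{j<k}(T_k+T_j-2T_kT_j)=(N-1)g_N-g_N^2+g_N^{(2)}$ used in the paper, leading to the same expression $V_N=-\gamma_N(\beta)\bigl[g_N^2-(1-2/\beta)g_N^{(2)}\bigr]$. The only blemish is the garbled intermediate ``identity'' with the ``(appropriate factor)'' placeholder; the correct statement, which you do note at the end, is simply that the coefficient of $g_N$ in $V_N$ is $-1+\tfrac{2+\beta(N-1)}{\beta N+2-\beta}=0$.
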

\begin{proof}
It\^o's formula yields
\begin{equation*}
\dd (g_N^p) = \sum_{i_1,\ldots,i_p = 1}^N \left[ 
\sum_{j=1}^p T_{i_1}\cdots dT_{i_j}\cdots T_{i_p} + \sum_{j\neq k=1}^p T_{i_1}\cdots dT_{i_j}\cdots dT_{i_k}\cdots T_{i_p}
\right]
\end{equation*}
so that
\begin{equation} \label{eq: unpolished hierarchy}
\frac{\dd }{\dd s}\bbE(g_N^p) = p\bbE\left(g_N^{p-1}\sum_k v_k\right) + \frac{p(p-1)}{2}\bbE\left(g_N^{p-2}\sum_k D_k^2\right)\,.
\end{equation}
where we used the DMPK equation~(\ref{dmpkeq}). In order to rewrite the right-hand side, note the simple identity
\begin{equation*}
\sum_k T_k \sum_{j\neq k}\frac{T_k+T_j-2T_kT_j}{T_k-T_j} = \sum_k\sum_{j < k}(T_k+T_j-2T_kT_j) = (N-1) g_N - g_N^2 + \sum_k T_k^2
\end{equation*}
Therefore,
\begin{align}
\sum_k v_k &= -g_N + \frac{2}{\beta N +2 -\beta}\left[g_N - \sum_k T_k^2 + (\beta/2)\left((N-1) g_N - g_N^2 + \sum_k T_k^2\right) \right] \nonumber\\
&=-\gamma_N(\beta)\left( g_N^2 - (1-2/\beta)g_N^{(2)}\right)\,,  \label{sumVk}
\end{align} 
and the lemma follows upon substituting this in \eqref{eq: unpolished hierarchy}.
\end{proof}

Let us now consider
\begin{equation*}
\Psi_{N}(p,s) :=\frac{\bbE(g^p_N(s))}{N^p}\,,
\end{equation*}
The following properties are immediate from the definition and the differential equation \eqref{hierarchy}
\begin{enumerate}
\item  $\Psi_{N}(p,0)=1$.
\item  $\streep \Psi_{N}(p,s) \streep \leq 1$.
\item The function $s \mapsto \Psi_{N}(p,s)$ is continuously differentiable and $ \streep \frac{\partial \Psi_{N}}{\partial s}(p,s) \streep  \leq c(p) < \infty$.\end{enumerate}
We consider the Banach space $\caL=\caC([0,T], \bbR)$ for some $T>0$, equipped with the supremum norm. Let  $\caL_p, p=1,2,\ldots$ be copies of $\caL$ and define  the cartesian product $\caK= \mathop{\times}\limits_{p=1}^{\infty} \caL_p$,  equipped with the product topology.

\begin{lma}\label{lem: compactness}  There is an increasing sequence $N_n, n \in \bbN$ and an element $\Psi \in \caK$ such that for each $p$, $
\Psi_{N_n}(p,\cdot) \to \Psi(p,\cdot) $ in $\caL$, as $n \to \infty$.
\end{lma}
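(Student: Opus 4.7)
The plan is to apply the Arzelà--Ascoli theorem for each fixed $p$ and then use a Cantor diagonal argument to pass to a subsequence that works simultaneously for all $p$.

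First I would fix $p\in\bbN$ and consider the family $\{\Psi_N(p,\cdot)\}_{N\geq 1}\subset\caL_p$. Property (ii) gives uniform boundedness ($|\Psi_N(p,s)|\leq 1$ for all $N$ and all $s\in[0,T]$), while property (iii) says that $s\mapsto\Psi_N(p,s)$ is continuously differentiable with derivative bounded uniformly in $N$ by some $c(p)<\infty$. Hence the family is equi-Lipschitz with constant $c(p)$, in particular equicontinuous. By Arzelà--Ascoli, the family $\{\Psi_N(p,\cdot)\}_{N}$ is relatively compact in $\caL_p=\caC([0,T],\bbR)$ equipped with the supremum norm.

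Next I would run a diagonal extraction. Applying the previous step with $p=1$, I extract a subsequence $(N^{(1)}_n)_{n}\subset\bbN$ along which $\Psi_{N^{(1)}_n}(1,\cdot)$ converges in $\caL_1$ to some $\Psi(1,\cdot)$. From $(N^{(1)}_n)$ I extract a further subsequence $(N^{(2)}_n)$ along which $\Psi_{N^{(2)}_n}(2,\cdot)$ converges in $\caL_2$ to some $\Psi(2,\cdot)$, and so on. Iterating, for each $p\geq 1$ I obtain a subsequence $(N^{(p)}_n)$ of $(N^{(p-1)}_n)$ along which $\Psi_{N^{(p)}_n}(p,\cdot)\to\Psi(p,\cdot)$ in $\caL_p$. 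The Cantor diagonal sequence $N_n := N^{(n)}_n$ is then eventually a subsequence of each $(N^{(p)}_n)$, so for every fixed $p$ we have
\begin{equation*}
\Psi_{N_n}(p,\cdot)\longrightarrow\Psi(p,\cdot)\quad\text{in } \caL_p \text{ as } n\to\infty.
\end{equation*}
Setting $\Psi := (\Psi(p,\cdot))_{p\geq 1}\in\caK$, this is exactly convergence $\Psi_{N_n}\to\Psi$ in the product topology on $\caK$, since convergence in a countable product equipped with the product topology is convergence coordinate by coordinate.

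There is essentially no obstacle here: the argument is a textbook compactness extraction. The only non-trivial input is the uniform bound on $\partial_s\Psi_N(p,s)$ claimed in property (iii), which, in view of Lemma \ref{lma:hierarchy}, follows from writing $\partial_s\Psi_N(p,s)$ in terms of rescaled moments of the type $\bbE(g_N^{p+1})/N^{p+1}$, $\bbE(g_N^{p-1}g_N^{(2)})/N^{p+1}$ and $\bbE(g_N^{p-2}(g_N^{(2)}-g_N^{(3)}))/N^{p}$; each such moment is bounded in absolute value by $1$ because $T_k\in[0,1]$ implies $0\leq g_N^{(j)}\leq g_N\leq N$, and the prefactor $p\gamma_N(\beta)$ is $O(1/N)$, so the combination is uniformly bounded by a constant $c(p)$ depending only on $p$.
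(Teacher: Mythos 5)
Your proof is correct and is essentially the paper's argument: the paper applies Arzelà--Ascoli to each family $\caS_p$ and then invokes the sequential Tychonov theorem for countable products, which is precisely the Cantor diagonal extraction you write out explicitly. Your closing verification of the uniform derivative bound in property (iii) via Lemma \ref{lma:hierarchy} is a welcome addition that the paper leaves implicit.
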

\begin{proof}
For $p=1,2,\ldots$, consider the sets $$ \caS_p=  \left\{  \Psi_{N}(p,\cdot), N \in \bbN\right\}  \subset \caL_p.$$
By the properties $\mathrm{i,ii,iii}$ above and Arzela-Ascoli's theorem, each of these sets is sequentially compact (s.c.). Countable products of s.c.\ sets are s.c.\ in the product topology (sequential Tychonov's theorem), hence $\times_p\caS_p \subset \caK$ is s.c.\  Therefore, the sequence $\Psi_N \subset \times_p\caS_p$  has a convergent subsequence.  Since convergence in the product topology implies convergence for any $p$,  the lemma follows.
\end{proof}

\begin{lma}  Any limit point $\Psi \in \caK$ as in Lemma \ref{lem: compactness}  satisfies 
\beq\label{eq: inf hierarchy}
\Psi(p,s_2)-\Psi(p,s_1) = - p \int_{s_1}^{s_2} \dd s \Psi (p+1,s), \qquad  \Psi(p,0) = 1
\eeq
\end{lma}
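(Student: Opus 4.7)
My plan is to derive the integral equation by starting from the hierarchy \eqref{hierarchy} at $N = N_n$, integrating in $s$ over $[s_1, s_2]$, and dividing through by $N_n^p$. The left-hand side then equals $\Psi_{N_n}(p, s_2) - \Psi_{N_n}(p, s_1)$, which converges to $\Psi(p, s_2) - \Psi(p, s_1)$ by Lemma~\ref{lem: compactness}. On the right-hand side, I would extract the leading contribution $\gamma_{N_n}(\beta)\,\bbE(g_{N_n}^{p+1})/N_n^p = \bigl(N_n \gamma_{N_n}(\beta)\bigr)\, \Psi_{N_n}(p+1, s)$ and treat the remaining two terms as errors. Since $N_n \gamma_{N_n}(\beta) = \beta N_n/(\beta N_n + 2 - \beta) \to 1$ and $\Psi_{N_n}(p+1, \cdot) \to \Psi(p+1, \cdot)$ uniformly on $[0, T]$ by Lemma~\ref{lem: compactness}, the leading integral passes to $\int_{s_1}^{s_2} \Psi(p+1, s)\, \dd s$.

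The core estimate is that the two correction terms vanish as $N_n \to \infty$. Using the a.s.\ bound $0 \leq T_k(s) \leq 1$ satisfied by the DMPK process, one has $g_N^{(j)} \leq g_N \leq N$ for all $j \geq 1$, which yields
\[
\bbE\!\left(g_N^{p-1} g_N^{(2)}\right) \;\leq\; \bbE(g_N^p) \;\leq\; N^p, \qquad \bbE\!\left(g_N^{p-2}(g_N^{(2)} - g_N^{(3)})\right) \;\leq\; N^{p-1}.
\]
After dividing by $N_n^p$ and multiplying by $\gamma_{N_n}(\beta) = O(1/N_n)$, both correction contributions are bounded by $C(p)/N_n$ uniformly in $s \in [0, T]$ and therefore vanish in the limit. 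The initial condition $\Psi(p, 0) = 1$ follows immediately from $\Psi_{N_n}(p, 0) = 1$, itself a consequence of $T_k(0) = 1$ in \eqref{dmpkeq}.

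I do not anticipate any serious obstacle: uniformity in $s$, needed to pass the limit through the integral, is furnished for free by the pointwise bound $T_k \in [0, 1]$, and the scaling balance $\gamma_N \sim 1/N$ against $\bbE(g_N^{p+1}) \sim N^{p+1}$ is exactly what makes the first term leading and the others subleading. The only minor subtlety is the $p = 1$ case of the third correction term, which is trivially zero thanks to the coefficient $(p-1)$.
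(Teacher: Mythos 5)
Your proposal is correct and takes essentially the same route as the paper: integrate the hierarchy, normalize by $N^p$, show the two correction terms are $\caO(1/N)$ uniformly in $s$ (the paper simply asserts this bound where you justify it via $T_k\in[0,1]$), and pass to the limit along the subsequence using the uniform convergence supplied by Lemma~\ref{lem: compactness}.
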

\begin{proof}
 The equation  \eqref{hierarchy} is rewritten as
\begin{equation} \label{eq: new form hierarchy}
\Psi_N(p,s_2)-\Psi_N(p,s_1)= -p  (1-r_1(N)) \int_{s_1}^{s_2} \dd s \left[
\Psi_N(p+1,s) - r_2(N,s,p) \right]
\end{equation}
where $r_2$ is the sum of the second and third term between square brackets in \eqref{hierarchy} and
\begin{align*}
r_1(N) &= 1-N\gamma_{N}  = \caO(1/N), \\
r_2(N,s,p) &= \left(1-\frac{2}{\beta}\right)\frac{\bbE(g_N^{p-1} g_N^{(2)})}{N^{p+1}} + \frac{2(p-1)}{\beta}\frac{\bbE(g_N^{p-2} (g_N^{(2)}-g_N^{(3)})}{N^{p+1}} = \caO(1/N)
\end{align*}
with the bounds $\caO(1/N)$ uniform in $s$ but not necessarily in $p$. 
The lemma follows by considering \eqref{eq: new form hierarchy}, for fixed $p$, along the sequence ${N_n}$.
\end{proof}

Sloppily put, the above two lemmas show that $\Psi_N(p,\cdot)$ converges to a solution of the `limiting hierarchy of equations' \eqref{eq: inf hierarchy}.   It remains to prove that the limiting hierarchy has a unique solution, namely the right-hand side of~\eqref{eq:Ohm}. 
Therefore, the proof of Theorem~\ref{thm:Ohm} is completed by the next lemma. 

\begin{lma} \begin{equation*}
\Psi (p,s): = \frac{1}{(1+s)^p}
\end{equation*}
is the unique element in $\caK$ that satisfies \eqref{eq: inf hierarchy} and $\sup_p \norm \Psi(p,\cdot)\norm_\infty \leq 1$.
\end{lma}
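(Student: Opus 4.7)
The plan is three steps: (i) directly verify that $(1+s)^{-p}$ is a solution; (ii) prove uniqueness on $[0,1)$ by iterating the integral equation satisfied by the difference of two candidate solutions; (iii) extend the uniqueness to all of $[0,T]$ by a restart argument. The main obstacle I anticipate is the threshold $s=1$ that appears naturally in (ii); it will be circumvented cheaply in (iii).

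\emph{Verification.} For $\Psi(p,s)=(1+s)^{-p}$ one immediately has $\Psi(p,0)=1$, $0<\Psi(p,s)\leq 1$ on $[0,T]$, and $\partial_{s}\Psi(p,s)=-p(1+s)^{-p-1}=-p\,\Psi(p+1,s)$. Integrating in $s$ yields (\ref{eq: inf hierarchy}) and the bound $\sup_{p}\norm \Psi(p,\cdot)\norm_{\infty}\leq 1$.

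\emph{Uniqueness on $[0,1)$.} Let $\tilde\Psi\in\caK$ be any other solution of (\ref{eq: inf hierarchy}) satisfying $\sup_{p}\norm \tilde\Psi(p,\cdot)\norm_{\infty}\leq 1$, and set $\Phi=\Psi-\tilde\Psi$. Then $\Phi(p,0)=0$, $|\Phi(p,s)|\leq 2$, and $\Phi(p,s)=-p\int_{0}^{s}\Phi(p+1,u)\,\dd u$. Iterating this identity $k$ times and bounding the resulting $k$-fold iterated integral by the simplex volume $s^{k}/k!$ yields
\begin{equation*}
|\Phi(p,s)| \;\leq\; 2\,\frac{(p+k-1)!}{(p-1)!}\,\frac{s^{k}}{k!} \;=\; 2\binom{p+k-1}{k}s^{k}.
\end{equation*}
Since $\binom{p+k-1}{k}\sim k^{p-1}/(p-1)!$ grows only polynomially in $k$, the right-hand side tends to zero as $k\to\infty$ for every $s<1$. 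Hence $\Phi\equiv 0$ on $[0,1)$.

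\emph{Restart and induction to $[0,T]$.} I will argue by induction that $\Phi\equiv 0$ on $[0,n)$ for every $n\in\bbN$. The case $n=1$ is the preceding step. For the induction step, continuity of $\Psi(p,\cdot),\tilde\Psi(p,\cdot)\in\caC([0,T],\bbR)$ upgrades $\Phi\equiv 0$ on $[0,n)$ to $\Phi(p,n)=0$ for every $p$. The shifted functions $s\mapsto\Psi(p,n+s)$ and $s\mapsto\tilde\Psi(p,n+s)$ then solve the same hierarchy (\ref{eq: inf hierarchy}) with identical initial data at $s=0$, and are still bounded by $1$. Applying the iteration argument of the previous step to them forces agreement on $[0,1)$, i.e.\ $\Phi\equiv 0$ on $[n,n+1)$, closing the induction. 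Choosing $n>T$ completes the proof. The delicate point is that the combinatorial factor $(p+k-1)!/((p-1)!\,k!)$ just fails to dominate $s^{k}$ at $s=1$, so the direct iteration cannot reach beyond $[0,1)$ on its own; the restart bypasses this without requiring any sharper estimate.
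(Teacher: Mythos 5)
Your proposal is correct and follows essentially the same route as the paper: both iterate the integral hierarchy $k$ times, bound the $k$-fold iterated integral by $\frac{(p+k)!}{(p-1)!}\frac{(s_2-s_1)^{k+1}}{(k+1)!}$ using the a priori bound $\sup_p\norm\Psi(p,\cdot)\norm_\infty\leq 1$, observe that this forces agreement only for increments $s_2-s_1<1$, and then restart inductively to cover all of $[0,T]$. The only cosmetic difference is that you subtract two solutions first (so only the remainder term survives), whereas the paper iterates a single solution and identifies the resulting sum as the Taylor polynomial of $(1+s)^{-p}$; both hinge on the identical remainder estimate.
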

\begin{proof}
We proceed by induction in the interval length $T$. Assume that the claim is proven for $T\geq 0$ (for $T=0$ it is trivial).  Take then 
$0 \leq s_1 \leq T$ and $ s_1< s_2 <s_1+1$.  
We choose a $\Psi(p,\cdot)$ satisfying \eqref{eq: inf hierarchy} and  $p\geq 1$. 
 We  iterate \eqref{eq: inf hierarchy} $k$ times to obtain
\begin{align*}
 \Psi(p,s_2)- \Psi(p,s_1) &= \sum_{j=1}^k  \frac{(s_2-s_1)^j}{j!} a(j,p) \Psi(p+j,s_1)  \\
 &\quad + \int_0^{s_2-s_1} \dd t_1\cdots\int_0^{t_{k}}\dd t_{k+1} a(k+1,p) \Psi(p+k+1,s_1+t_{k+1}) \,.
\end{align*}
where $a(k,p) = (-1)^k(p+k-1)! / (p-1)!$. By the induction hypothesis, $\Psi(p+j,s_1)= (1+s_1)^{-(p+j)}$ and hence the sum on the right-hand side is the $k$th order Taylor polynomial of the function $s \mapsto (1+s)^{-p}$ at $s=s_1$. The series is absolutely convergent for $s_2-s_1 < 1+s_1$. 

Upon using $\sup_p \norm \Psi(p,\cdot)\norm_\infty \leq 1$, the second term is bounded as
\begin{equation*}
\int_0^{s_2-s_1} \dd t_1\cdots\int_0^{t_{k}}\dd t_{k+1}  \streep a(k+1,p) \streep 
\leq \frac{(s_2-s_1)^{k+1}}{(k+1)!} \frac{(p+k)!}{(p-1)!} 
\end{equation*}
which converges to zero as $k\to\infty$ whenever $s_2-s_1<1$.  Therefore  $ \Psi(p,s_2)=(1+s_2)^{-p}$, completing the induction step.

\end{proof}


\section{A microscopic model}\label{sec: micro}


The DMPK theory is a macroscopic theory based on few symmetry assumptions, but does not refer to any particular physically relevant microscopic (Hamiltonian) model. We now introduce a concrete quantum lattice model with disorder, identify the physical symmetries, define the corresponding ensemble of transfer matrices and study its properties for long wires. In the relevant weak coupling limit, we shall derive the stochastic differential equation to be compared with the DMPK evolution. 


\subsection{The Hamiltonian; symmetries and spectrum}\label{sub: Hamiltonian}

The system is an infinitely extended wire, modeled by the Hilbert space
\begin{equation*}
\caH = l^2 (\bbZ\times\bbZ_N ) = l^2(\bbZ)\otimes \bbC^N\,.
\end{equation*}
A vector $\Psi\in\caH$ is a sequence $\Psi(x,z)$, with longitudinal coordinate $x$ and transverse coordinate $z$, or rather $\Psi_x(z)$ if we prefer to think of a $\bbC^N$-valued sequence. The Hamiltonian
\begin{equation*}
H = H_{\mathrm{kin}} + \lambda V
\end{equation*}
has a deterministic, translation invariant kinetic term
and a random on-site potential
\begin{equation*}
(V\Psi)(x,z) = V(x,z)\Psi(x,z)\,.
\end{equation*}
The disorder is limited to a finite region, namely $V(x,z) = 0$ for $x\notin\{1,\ldots,L\}$. The non vanishing elements $V(x,z)$ are i.i.d. real random variables, with  $\bbE(V(x,z)) = 0$ and normalized to have $\bbE(V(x,z)^2) = 1$ so that the strength of the disorder is exclusively controlled by the parameter $\lambda$.

The specific form of the kinetic Hamiltonian will only play a role in determining the symmetry class to which the system belongs, and could therefore be left essentially open, up to these limited symmetry requirements. For simplicity and definiteness, we shall however make here a particular choice that allows for an explicit tracking of the symmetries and their consequences. Henceforth $H_\mathrm{kin}$ will describe a nearest neighbor and diagonal hopping in the presence of a magnetic field in the longitudinal direction, namely
\begin{multline}
\left(H_\mathrm{kin}\Psi\right)(x,z) = \Psi(x+1,z) + \Psi(x-1,z)  \\
+ h_1 \left[ \ep{\iu\gamma}\Psi(x,z+1) + \ep{-\iu\gamma}\Psi(x,z-1) \right] + h_2 \left[ \ep{\iu\gamma}\Psi(x-1,z+1) + \ep{-\iu\gamma}\Psi(x+1,z-1)) \right]  \label{eq: kinetic energy}
\end{multline}
where $h_1,h_2>0$ and $0\leq\gamma < 2\pi / N$. Note that as indicated in the definition of $\hilb$, periodic boundary conditions in the transverse direction are imposed on all operators. \\

As we have briefly discussed in Section~\ref{sec: DMPK}, the DMPK equation comes in various guises, depending on the abstract symmetry group of the transmission matrix. At the microscopic level, time reversal $T$ is naturally implemented by complex conjugation in the `position basis', namely
\begin{equation*}
(T\Psi)(x,z) = \overline{\Psi(x,z)}\,.
\end{equation*}
It is immediate to check that $H_\mathrm{kin}$ is invariant under $T$ iff $\gamma  =0$: the magnetic field indeed breaks time reversal invariance.

\subsubsection{Eigenvalues, eigenvectors and chaoticity}

We consider the eigenvalue problem for the kinetic Hamiltonian
\begin{equation}
 \label{evp}
H_{\mathrm{kin}}\Psi=E\Psi
\end{equation}
at some fixed energy $E$. 
 By translation invariance and the periodic boundary conditions, the (non-normalizable) solutions are given by
\begin{equation*}
\Psi_{k,\nu}(x,z)=\frac{1}{\sqrt{N}}e^{ik x}e^{\frac{2\pi i}{N}\nu z}
\end{equation*}
for $\nu=1,\ldots,N$, with
\begin{equation}
\label{momenta} 
E =E(k,\nu)= 2\cos(k) + 2h_1\cos\left(\gamma + \frac{2\pi}{N}\nu\right) + 2 h_2\cos\left(k - \gamma - \frac{2\pi}{N}\nu\right)
\end{equation}
We now look for the condition on $E$ so that \eqref{evp} has  plane wave solutions, rather than exponentially decaying ones, i.e.\ such that $E=E(k,\nu)$ for some $k,\nu$. In physical terms, this means that we do not want to study evanescent modes, also called `elliptic channels'.

In fact, we first fix the energy $E$ and consider solutions $k = k_\nu(E)$ of~(\ref{momenta}) for any `transversal mode' $\nu$. We shall drop the $E$ dependence in the sequel. There are two such wavevectors $k_\nu^\pm$ corresponding to a right moving and a left moving wave. The relations
\begin{equation} \label{TRIrelations}
k_\nu^+ = -k_{-\nu}^-
\end{equation}
holds in the time reversal invariant case, but is broken if $\gamma>0$. If $h_2=0$, the residual symmetry $\Psi(x,z) \mapsto \Psi(x,-z)$ induces the additional degeneracy $k_\nu^+ = -k_\nu^-$ that needs to be avoided. It is a straightforward exercise to check
\begin{lma}
 \label{lmawavevectors}
For an energy $E\neq0$, $|E|<2$, and a kinetic Hamiltonian $H_{\mathrm{kin}}$ with parameters $0\leq\gamma<\frac{\pi}{N}$, and $h_1,h_2>0$ sufficiently small, in particular
\begin{equation}\label{energyrestriction}
|E|+2h_1+2h_2<2\,,
\end{equation}
the equation
\begin{equation*}
H_{\mathrm{kin}}\Psi=E\Psi
\end{equation*}
has $2N$ plane wave solutions
\begin{equation*}
 \Psi_{k,\nu}(x,z)=\frac{1}{\sqrt{N}}e^{ik_\nu^{\sigma} x}e^{\frac{2\pi i}{N}\nu z}
\end{equation*}
with $\nu\in\bbZ_N$, $\sigma\in\lbrace{+,-\rbrace}$. For $\ga \neq 0$, the longitudinal wave numbers $k_\nu^{\sigma}$ are non degenerate in the sense that
\begin{equation*}
\vs_1 k_{\nu(1)}^{\sigma_1}+\vs_2 k_{\nu(2)}^{\sigma_2}+\vs_3 k_{\nu(3)}^{\sigma_3}+\vs_4 k_{\nu(4)}^{\sigma_4}=0\mbox{ }(\mathrm{mod}\mbox{ }2\pi)
\end{equation*}
for signs $\vs_1,...,\vs_4$ only in the trivial case
\begin{equation}
\label{exceptionTRI}
\big(\vs_1, \sigma_1, \nu(1)\big) = \big(-\vs_2, \sigma_2, \nu(2)\big)\quad\text{and}\quad
\big(\vs_3, \sigma_3, \nu(3)\big) = \big(-\vs_4, \sigma_4, \nu(4)\big)\,,
\end{equation}
and if $\gamma=0$ also in the $T$-symmetric situation
\begin{equation}
\label{exception}
\big(\vs_1, \sigma_1, \nu(1)\big) = \big(\vs_2, -\sigma_2, -\nu(2)\big)\quad\text{and}\quad
\big(\vs_3, \sigma_3, \nu(3)\big) = \big(\vs_4, -\sigma_4, -\nu(4)\big)\,,
\end{equation}
all of this up to relabeling of the indices.
\end{lma}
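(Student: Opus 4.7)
The plan is to treat the two assertions of the lemma separately: the existence of $2N$ plane-wave solutions, and the non-degeneracy of the longitudinal wavenumbers.

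\emph{Existence.} Plugging the plane-wave ansatz into $H_\mathrm{kin}\Psi = E\Psi$ reduces, for each transversal mode $\nu$, the eigenvalue problem to
\beqn
A_\nu \cos k + B_\nu \sin k = C_\nu,
\eeqn
where $A_\nu := 2(1 + h_2\cos\al_\nu)$, $B_\nu := 2 h_2\sin\al_\nu$, $C_\nu := E - 2h_1\cos\al_\nu$, and $\al_\nu := \ga + 2\pi\nu/N$. A direct computation gives
\beqn
\sqrt{A_\nu^2+B_\nu^2} \;=\; 2\sqrt{1+2h_2\cos\al_\nu+h_2^2} \;\geq\; 2(1-h_2) \;>\; |E|+2h_1 \;\geq\; |C_\nu|,
\eeqn
where the strict middle inequality is \eqref{energyrestriction}. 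Hence for each $\nu$ the reduced equation has exactly two real solutions, which I would label $k_\nu^\pm$ by the sign of the group velocity $\partial_k E = -2\sin k - 2h_2\sin(k-\al_\nu)$. This yields $2N$ plane-wave solutions.

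\emph{Non-degeneracy, setup.} Set $\epsilon_j := \sigma_j \vs_j \in \{-1,+1\}$ and $n := \sum_{j=1}^4 \epsilon_j$, and consider the hypothetical relation $\sum_{j=1}^4 \vs_j k_{\nu(j)}^{\sigma_j} = 2\pi m$ for some $m \in \bbZ$. Since $\partial_k F = -2\sigma\sin k_0 \neq 0$ at $h=0$, the implicit function theorem yields the smooth parametrisation
\beqn
k_\nu^\sigma(h_1,h_2) \;=\; \sigma k_0 + \frac{\sigma}{\sin k_0}\bigl[h_1 \cos\al_\nu + h_2\cos(\sigma k_0 - \al_\nu)\bigr] + O(h^2),
\eeqn
with $k_0 := \arccos(E/2) \in (0,\pi)\setminus\{\pi/2\}$ thanks to $E\neq 0$ and $|E|<2$.

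\emph{Non-degeneracy, zeroth order.} At $h_1=h_2=0$ the left-hand side of the putative relation is $n k_0$. Because $n$ is even with $|n|\leq 4$ and $k_0 \notin\{0,\pi/2,\pi\}$, direct inspection rules out $n k_0 \in 2\pi\bbZ\setminus\{0\}$, so the only possibility is $n = 0$ (and $m=0$). For $n\neq 0$ continuity of $k_\nu^\sigma$ in $(h_1,h_2)$ keeps the sum bounded away from $2\pi\bbZ$ for small $h_1,h_2$, and there is nothing to prove.

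\emph{Non-degeneracy, first order.} Assuming $n=0$, so that two of the $\epsilon_j$'s equal $+1$ and two equal $-1$, the first-order term in $(h_1,h_2)$ of $\sum_j \vs_j k_{\nu(j)}^{\sigma_j}$ reads
\beqn
\frac{h_1}{\sin k_0}\sum_j \epsilon_j \cos\al_{\nu(j)} \;+\; \frac{h_2}{\sin k_0}\sum_j \epsilon_j \cos(\sigma_j k_0 - \al_{\nu(j)}).
\eeqn
Using that $0\leq \ga < \pi/N$ makes $\nu \mapsto \al_\nu$ injective modulo $2\pi$, I would then enumerate the three ways of partitioning $\{1,\ldots,4\}$ into two pairs of opposite $\epsilon_j$ together with the choices of $(\sigma_j,\nu(j))$, and check that the two coefficients above vanish simultaneously in exactly two families of configurations: (i) the pair $(1,2)$ satisfies $(\sigma_1,\nu(1))=(\sigma_2,\nu(2))$ with $\vs_1=-\vs_2$, and similarly for $(3,4)$, which is \eqref{exceptionTRI} and makes $\sum_j \vs_j k_{\nu(j)}^{\sigma_j}$ identically zero by termwise cancellation; (ii) when $\ga=0$, the pair satisfies $(\sigma_1,\nu(1))=(-\sigma_2,-\nu(2))$ with $\vs_1=\vs_2$, which is \eqref{exception}; the symmetry $E(k,\nu)=E(-k,-\nu)$ of the $\ga=0$ dispersion gives $k^{-\sigma}_{-\nu} = -k^\sigma_\nu$ and again forces the sum to be identically zero. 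In every other configuration at least one of the two first-order coefficients is non-zero, so the total sum is bounded away from $2\pi\bbZ$ for small enough positive $h_1,h_2$.

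\emph{Main obstacle.} The genuinely laborious step is the combinatorial enumeration in the first-order analysis. One must verify that for each non-trivial pairing of $\epsilon_j$'s and each non-trivial assignment of $(\sigma_j,\nu(j))$, simultaneous vanishing of the two coefficients above would force a $\bbZ$-linear relation among the finitely many numbers $\{\cos\al_\nu\}_\nu$ and $\{\cos(\pm k_0 - \al_\nu)\}_\nu$ incompatible with the injectivity of $\nu\mapsto\al_\nu$ granted by $\ga<\pi/N$, except precisely in the cases \eqref{exceptionTRI} and, when $\ga=0$, \eqref{exception}. Making the qualifier ``sufficiently small'' quantitative amounts to exhibiting an explicit lower bound on these first-order coefficients in terms of $E,\ga,N$, which the paper leaves implicit.
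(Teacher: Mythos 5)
The paper itself offers no proof of this lemma --- it is dispatched with ``It is a straightforward exercise to check'' --- so there is nothing to compare your argument against; it has to stand on its own. Your existence argument does: reducing the dispersion relation to $A_\nu\cos k+B_\nu\sin k=C_\nu$ and checking $|C_\nu|<\sqrt{A_\nu^2+B_\nu^2}$ via \eqref{energyrestriction} is correct and complete, as is the zeroth-order elimination of $n\neq 0$ using $E\neq 0$, $|E|<2$.

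The non-degeneracy part, however, has a genuine gap: the combinatorial claim that ``in every other configuration at least one of the two first-order coefficients is non-zero'' is false. Write $\epsilon_j=\varsigma_j\sigma_j$ and $\omega=e^{2\pi i/N}$; if all $\sigma_j$ are equal, your two coefficients are $\mathrm{Re}\,\Sigma$ and $\mathrm{Re}(e^{i\sigma k_0}\overline{\Sigma})$ with $\Sigma=e^{i\gamma}\sum_j\epsilon_j\omega^{\nu(j)}$, so both vanish whenever the signed sum of roots of unity $\sum_j\epsilon_j\omega^{\nu(j)}$ vanishes. For any even $N\geq 4$ this happens non-trivially: e.g.\ $N=4$, $\sigma_j\equiv +$, $(\varsigma_j)=(+,-,+,-)$, $(\nu(j))=(0,1,2,3)$ gives $(1+\omega^2)(1-\omega)=0$, and this configuration satisfies neither \eqref{exceptionTRI} (the $\nu(j)$ are pairwise distinct) nor \eqref{exception} (all $\sigma_j$ equal). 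So the first-order expansion cannot separate all non-exceptional configurations from $2\pi\bbZ$; one must go to second order in $(h_1,h_2)$, or argue via real-analyticity of $h\mapsto\sum_j\varsigma_jk^{\sigma_j}_{\nu(j)}$ that these functions are not identically zero and then choose $(h_1,h_2)$ off their (nontrivial, accumulating at the origin) zero sets --- which also shows that the conclusion cannot hold for \emph{all} sufficiently small $h_1,h_2>0$, only for suitable ones, consistent with the paper's later quantitative requirement $\lambda^{-2}\cha\to\infty$ in \eqref{tozero}. A second, milder issue of the same nature: even when $(c_1,c_2)\neq(0,0)$, the first-order term $(h_1c_1+h_2c_2)/\sin k_0$ vanishes along a ray of slopes $h_1/h_2=-c_2/c_1$, which may lie in the positive quadrant; so ``at least one coefficient nonzero'' does not by itself bound the sum away from $2\pi\bbZ$ uniformly over all small positive $h_1,h_2$.
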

The above lemma shows that the Hamiltonian has no \emph{other} symmetry than time reversal invariance, in accordance with the macroscopic theory of the previous section. It is important to realize that a residual symmetry of the kinetic Hamiltonian could, and in fact would, leave a trace in the scaling limit, \emph{even if that symmetry got broken by the disorder}. The DMPK theory considers a single ensemble of transfer matrices, whereas there are really two microscopic models, one without and one with disorder. For our derivation, it is essential that both have the correct symmetry properties.
 
For further considerations, it is useful to define the chaoticity of the kinetic Hamiltonian in analogy to~\cite{viragvalkoexplosion} by 
\begin{equation}
 \label{defcha}
\cha\left(\gamma,h_1,h_2\right)=\min\left\lbrace\left|\sum_{i=1}^4 \vs_i k_{\nu(i)}^{\sigma_i}\right|, ({\vs},{\sigma},{\nu})\mbox{ not solving } (\ref{exceptionTRI}) \right\rbrace\,,
\end{equation}
for $\gamma>0$ and
\begin{equation}
 \label{defchawithout}
\cha\left(0,h_1,h_2\right)=\min\left\lbrace\left|\sum_{i=1}^4 \vs_i k_{\nu(i)}^{\sigma_i}\right|,
({\vs},{\sigma},{\nu})\mbox{ not solving } (\ref{exceptionTRI})\mbox{ or } (\ref{exception})\right\rbrace.
\end{equation}
if $\gamma=0$.


\subsection{The transfer matrix}\label{sub: MicroTransfer}

We decompose the kinetic Hamiltonian
\beq
(H_\mathrm{kin}\Psi)_x=  H_\perp  \Psi_x +  P \Psi_{x+1} + P^* \Psi_{x-1} 
\eeq
as a strictly transverse operator
\begin{equation*}
(H_\perp\phi)(z) = h_1 \left( \ep{\iu\gamma}\phi(z+1) + \ep{-\iu\gamma}\phi(z-1) \right)\,,
\end{equation*}
and the components inducing hopping in the longitudinal direction
\begin{equation*}
(P\phi)(z) = \phi(z) + h_2 \ep{\iu\gamma} \phi(z+1)\,,\quad\text{and}\quad(P\str\phi)(z) = \phi(z) + h_2 \ep{-\iu\gamma} \phi(z-1)\,,
\end{equation*}
where $\phi(z)\in\bbC^N$. Similarly, the random potential can be seen as a sequence of $N\times N$ diagonal matrices $V_x$. With these notations, the eigenvalue equation with disorder, $\lambda>0$, reads
\begin{equation*}
\begin{pmatrix}\Psi_{x+1}\\\Psi_{x}\end{pmatrix}=
\begin{pmatrix}\left(P^*\right)^{-1}(E-H_\perp-\lambda V_x)&-\left(P^*\right)^{-1}P\\{1}&{0}\end{pmatrix}\begin{pmatrix}\Psi_{x}\\\Psi_{x-1}\end{pmatrix}=:T_x^\lambda\begin{pmatrix}\Psi_{x}\\\Psi_{x-1}\end{pmatrix},
\end{equation*}
thereby defining the transfer matrix $T_x^\lambda$ (of dimension $2N$) for the layer $x$ as it usually appears in the mathematical literature. 
By the multiplicative property, the transfer matrix for $L$ layers is simply given by the product of the one-layer matrices,
\begin{equation*}
 \begin{pmatrix}\Psi_{L+1}\\\Psi_{L}\end{pmatrix}=T_L^\lambda\cdots T_1^\lambda\begin{pmatrix}\Psi_{1}\\\Psi_{0}\end{pmatrix},
\end{equation*}
In order to allow for a comparison with the DMPK theory, it is important to write the transfer matrix in the natural basis for $\bbC^{2N}$, i.e. the basis in which the first $N$ components correspond to left moving waves, and the other $N$ components are right moving. Morever, in the time-reversal invariant case we need the correct identification between left and right moving channels, given by~(\ref{TRIrelations}). In other words, the time reversal of the vector 
$
(\phi_1, \phi_2)^t
$ 
must correspond to
$
(\bar{\phi}_2, \bar{\phi}_1)^t
$.
Only in that basis do the definitions of the transfer matrix given in Section \ref{sub: Transfer} apply. We shall refer to it as the \emph{channel basis}.

Let $Q$ be the $N\times N$ matrix whose columns are the transverse eigenvectors $\phi_\nu(z) = 1/\sqrt{N}\exp(2\pi \iu\nu z/N)$. Furthermore, let $\Pi$ be the permutation matrix that interchanges the $\nu$ and $-\nu$ channels, and let
\begin{equation*}
\Upsilon :=
 \begin{pmatrix}\frac{\ep{\iu k^+}}{\sqrt{\left|v^+\right|}}&\frac{\ep{\iu k^-}}{\sqrt{\left|v^-\right|}}\\\frac1{\sqrt{\left|v^+\right|}}&\frac1{\sqrt{\left|v^-\right|}}\end{pmatrix}
 \begin{pmatrix}1&0\\0&\Pi\end{pmatrix} 
\end{equation*}
where $k^\pm$ and $v^\pm$ denote diagonal matrices with elements $k^\pm_{\nu}$ and $v^\pm_\nu$, and $v^\pm_\nu$ are the velocities in each channel,
\begin{equation*}
v^\pm_\nu=\left.\frac{\partial}{\partial k}E(k,\nu)\right|_{k=k_\nu^\pm}=-2\sin\left(k^\pm_\nu\right)-2h_2\sin\left(k^\pm_\nu - \frac{2\pi}{N}\nu - \gamma\right).
\end{equation*}
Note that all $v^\pm_\nu$ are nonzero by (\ref{energyrestriction}). The action of $T$ (complex conjugation) on the eigenvectors yields
\begin{equation*}
T:\,(k_\nu^\sigma,\nu)\mapsto(-k_\nu^\sigma,-\nu)
\end{equation*}
which equals $(k_{-\nu}^{-\sigma},-\nu)$ if time reversal invariance holds, so that the $(\sigma = -)$ block needs to be reordered by $\nu\leftrightarrow-\nu$. Hence the permutation $\Pi$. With these notations, Lemma~\ref{lmawavevectors} reads
\begin{equation} \label{diagsecond}
\Upsilon^{-1}(Q\otimes 1)\str T_x^0(Q\otimes 1)\Upsilon = \begin{pmatrix}\ep{\iu k^+}&0\\0&\Pi \ep{\iu k^-} \Pi\end{pmatrix} =: M_x^0\,.
\end{equation}
The matrix $M_x^0$ is the transfer matrix for the purely kinetic transport in a single layer, which is diagonal the channel basis. As expected, it simply adds a phase to the traveling plane wave.

It is now easily checked that $M_x^0$ is a bonafide transfer matrix in the sense of Definition~\ref{Def:Transfer}: Eq.~(\ref{PseudoUnitary}) is always satisfied and (\ref{M_TRI}) holds at $\gamma=0$, as conjugation by $\Sigma_x$ precisely maps
$k^\sigma_\nu$ to $k^{-\sigma}_{-\nu}=-k^\sigma_{\nu}$. 
 The transfer matrix for the total system is again obtained by multiplication
\begin{equation*}
M^\lambda(L) :=  M^\lambda_L \cdots  M^\lambda_1   =   K^{-1} \, T_L^\lambda\cdots T_1^\lambda \, K\,,
\end{equation*}
where $K = (Q\otimes 1)\Upsilon$ is the change of basis from the position basis to the channel basis.
This random matrix and its relation the DMPK theory is the central object of study in the following.

\subsection{A scaling limit}\label{sub: scaling}

The DMPK theory suggests that the microscopic transfer matrix for a disordered wire of length $L$ should converge to a solution of the DMPK equation in the correct macroscopic limit. As discussed in the introduction, the natural scaling between the microscopic length $L$ and the macroscopic length $s$ is through the mean free path, $L = \lambda^{-2}s$. A naive interpretation of the DMPK theory would then be the convergence of $M^\lambda(\lfloor \lambda^{-2}s \rfloor)$ to $\caM(s)$. This cannot possibly hold as $M^\lambda(\lfloor \lambda^{-2}s \rfloor)$ contains rapidly oscillating terms as a function of $s$, as already exemplified at the level of the unperturbed system (\ref{diagsecond}):
\begin{equation*}
 M^0(\lfloor \lambda^{-2}s \rfloor)=\begin{pmatrix}\exp\left(\iu \lfloor \lambda^{-2}s \rfloor k^+\right)&0\\0&\Pi \exp\left(\iu \lfloor \lambda^{-2}s \rfloor k^-\right) \Pi\end{pmatrix}\,.
\end{equation*}
To obtain a reasonable limit, we therefore consider
\begin{equation*}
 A^\lambda(\lfloor \lambda^{-2}s \rfloor) := (M^0(\lfloor \lambda^{-2}s \rfloor))^{-1}M^\lambda(\lfloor \lambda^{-2}s \rfloor).
\end{equation*}
As the set of matrices of Definition~\ref{Def:Transfer} form a group, $A^\lambda$ is a transfer matrix again. Moreover, it is an easy check that in the polar decomposition \eqref{MSdecomp}, the matrix $S$ corresponding to $A^\lambda$ is the same as that corresponding to  $M^\lambda$,  so that they have the same transmission eigenvalues.

In order to state the result of the scaling limit, $\lambda\to0$, we introduce the following processes. For $\gamma>0$, let
\begin{equation*}
 \caZ_{\gamma}(s) := \begin{pmatrix} \mathfrak{a}(s) & \mathfrak{b}(s) \\\mathfrak{b}^*(s) &{ \mathfrak{a}'}(s)\end{pmatrix}
\end{equation*}
with 
\begin{equation*}
 \mfa_{\mu\mu}(s)=-\mfa'_{\mu\mu}(s)=\frac{\iu}{\sqrt{(4-E^2)N}}W(s)
\end{equation*}
with the \emph{same} standard real Brownian motion $W$ for all $\mu=1,\ldots,N$. All diagonal elements of $\caZ$ are thus perfectly correlated. For the off-diagonal elements, 
\begin{align*}
 \mfa_{\mu\nu}(s)=-\overline{\mfa_{\nu\mu}(s)}&=\frac{1}{\sqrt{(4-E^2)N}}B^{++}_{\mu\nu}(s)&\mbox{ for }&1\leq\mu<\nu\leq N\\
\mfa'_{\mu\nu}(s)=-\overline{\mfa'_{\nu\mu}(s)}&=\frac{1}{\sqrt{(4-E^2)N}}B^{--}_{\mu\nu}(s)&\mbox{ for }&1\leq\mu<\nu\leq N\\
\mfb_{\mu\nu}(s)&=\frac{1}{\sqrt{(4-E^2)N}}B^{+-}_{\mu\nu}(s)&\mbox{ for }&1\leq\mu,\nu\leq N
\end{align*}
 with all the elements of $B^{++},B^{--}$ and $B^{+-}$ standard complex Brownian motions, mutually independent and independent of $W$. For $\gamma=0$,
\begin{equation*}
 \caZ_0(s) := \begin{pmatrix} \mathfrak{a}(s) & \mathfrak{b}(s) \\ \overline{\mathfrak{b}(s)} &\overline{ \mathfrak{a}(s)}\end{pmatrix}
\end{equation*}
with the definition of $\mfa$ unchanged, but $\mfa'=\overline{\mfa}$ now, and a symmetric $\mfb$:
\begin{align*}
\mfb_{\mu\nu}(s)=\mfb_{\nu\mu}&=\frac{1}{\sqrt{(4-E^2)N}}\tilde{B}^{+-}_{\mu\nu}(s)&\mbox{ for }&1\leq\mu\leq\nu\leq N,
\end{align*}
with the entries of the $\tilde{B}^{+-}$ independent standard complex Brownian motions again, independent of the elements of $B^{++}$ and $W$.

The result of the scaling limit, $\lambda\to0$ is summarized in the main theorem.
\begin{thm}
\label{thm:convA}
If $h_1$ and $h_2$ depend on $\lambda$ so that
\begin{equation}
\label{tozero}
h_1(\lambda) \longrightarrow 0\,,\qquad
h_2(\lambda) \longrightarrow 0, \qquad\text{and}\qquad \lambda^{-2}\cha(\gamma,h_1(\lambda),h_2(\lambda))\longrightarrow\infty\,,
\end{equation}
as $\lambda\rightarrow0$, 
then the process $\left(A^\lambda\left(\left\lfloor\lambda^{-2}s\right\rfloor\right)\right)_{s\geq0}$ converges in distribution to the process $\left(\caA(s)\right)_{s\geq0}$
%
on the path space of $\bbC^{2N\times 2N}$-valued processes endowed with Skorhod topology.

\noindent{}For $\gamma\geq0$, $\left(\caA(s)\right)_{s\geq0}$ is given as the unique solution for $s\geq0$ to
\begin{equation}
 \label{defcaA}
\begin{split}
\dd\caA(s)&=\dd\caZ_\gamma(s)\caA(s)\\
\caA(0)&=1.
\end{split}
\end{equation}
\end{thm}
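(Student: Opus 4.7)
The plan is to view $A^\lambda$ as a discrete-time Markov chain with $O(\lambda)$ increments and apply a standard diffusive invariance principle, identifying the limiting generator via averaging of the explicit oscillating phases carried by $M^0$.

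In the channel basis, the one-layer transfer matrix admits the \emph{exact} expansion $M^\lambda_x = M^0_x + \lambda\,\caV_x$, where $\caV_x := K^{-1}[(-(P\str)^{-1}V_x)\oplus 0]K$ is linear in the fresh disorder $V_x$. Using $M^0(L+1)=M^0_{L+1}M^0(L)$ and the definition of $A^\lambda$, one obtains the exact recursion
\[
A^\lambda(L+1) - A^\lambda(L) = \lambda\,\caX_L\,A^\lambda(L), \qquad \caX_L := (M^0(L+1))^{-1}\,\caV_{L+1}\,M^0(L),
\]
where $\bbE[\caX_L]=0$ and the $L$-dependence of $\caX_L$ beyond the fresh $V_{L+1}$ is entirely through the deterministic diagonal phases $M^0(L) = \mathrm{diag}(\ep{\iu L k^+},\,\Pi\,\ep{\iu L k^-}\,\Pi)$.

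Set $\caA^\lambda(s) := A^\lambda(\lfloor\lambda^{-2}s\rfloor)$. By a standard generator-convergence theorem, convergence in Skorokhod topology to the (weakly unique, linear SDE with bounded coefficients) solution of (\ref{defcaA}) reduces to tightness plus
\[
\lambda^{-2}\,\bbE\bigl[F(A^\lambda(L+1))-F(A^\lambda(L))\bigm|A^\lambda(L)=\caA\bigr] \;\longrightarrow\; \tfrac{1}{2}\,\bbE\bigl[\nabla^2 F(\caA)(\dd\caZ_\gamma\,\caA,\dd\caZ_\gamma\,\caA)\bigr]
\]
for every smooth compactly supported $F$, uniformly in $L\in[0,\lambda^{-2}T]$. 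The $O(\lambda)$ term of the Taylor expansion vanishes since $\caX_L$ is centered, so the computation reduces to the second-moment sum $\lambda^2\sum_L\bbE[(\caX_L)_{ab}\,\overline{(\caX_L)_{cd}}]$. Expanding $\caV_{L+1}$ in the channel basis and using $\bbE[V(z)V(z')]=\delta_{zz'}$ produces a product of a transverse Fourier factor $N^{-1}\sum_z\ep{\tfrac{2\pi\iu z}{N}(\nu_1-\nu_2+\nu_3-\nu_4)}$, which enforces $\nu_1-\nu_2+\nu_3-\nu_4\equiv 0\pmod N$, and a longitudinal phase sum $\lambda^2\sum_L\ep{\iu L\sum_i\vs_i k^{\sigma_i}_{\nu_i}}$. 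The chaoticity hypothesis $\lambda^{-2}\cha\to\infty$ implies that each non-resonant phase sum is bounded in absolute value by $2\lambda^2/\cha\to 0$, while each resonant one contributes the full interval length $s_2-s_1$. Tightness is standard, given that $\caX_L$ is uniformly bounded in every $L^p$ norm.

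By Lemma~\ref{lmawavevectors}, the surviving resonances are exactly the tuples listed in (\ref{exceptionTRI}) for $\gamma>0$, together with (\ref{exception}) in the TRI case $\gamma=0$. Enumerating these tuples, combined with the explicit prefactors --- $N^{-1}$ from $Q$, $|v^\pm_\nu|^{-1}\to(4-E^2)^{-1/2}$ from $\Upsilon$ as $h_1,h_2\to 0$, and $\bbE[V^2]=1$ --- reproduces exactly the covariance structure of $\caZ_\gamma$: independent complex Gaussian entries of variance $1/((4-E^2)N)$ for the off-diagonal blocks of $\mfa,\mfa'$ and all of $\mfb$, perfectly correlated diagonal imaginary parts via the single Brownian motion $W$ (since the pairing $\nu_1=\nu_2,\ \vs_1=-\vs_2$ is independent of $\nu$), and for $\gamma=0$ the TRI identifications $\mfa'=\overline{\mfa}$ and symmetric $\mfb$ from the additional resonances (\ref{exception}). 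The main obstacle is precisely this bookkeeping step: one must verify uniformity of the phase-averaging errors across all index quadruples and confirm that the resonant combinations assemble into $\dd\caZ_\gamma$ with no spurious cross-correlations. It is exactly this case analysis, where the microscopic symmetry ($\gamma=0$ vs.\ $\gamma>0$) and the transverse Fourier structure enter, that determines the $\beta=1$ versus $\beta=2$ character of the limiting SDE.
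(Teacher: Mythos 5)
Your recursion for $A^\lambda$ and your identification of the limiting covariance via phase averaging, the chaoticity bound $2\lambda^2/\cha$, and the resonance classification of Lemma~\ref{lmawavevectors} reproduce the core computation of the paper (its Lemma~\ref{lma_convergeZ}). The gap is in the reduction step. The pointwise generator convergence you postulate, $\lambda^{-2}\,\bbE\bigl[F(A^\lambda(L+1))-F(A^\lambda(L))\bigm|A^\lambda(L)\bigr]\to\tfrac12\bbE[\nabla^2F(\cdot)]$ \emph{uniformly in} $L\le\lambda^{-2}T$, is false: the chain is time-inhomogeneous, and at fixed $L$ the conditional second moment of the increment carries the unaveraged phase $\exp(\iu L\sum_i\vs_i k^{\sigma_i}_{\nu_i})$, which for non-resonant quadruples merely oscillates and converges to nothing as $\lambda\to0$. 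Only the Ces\`aro average over $L\le\lambda^{-2}s$ converges --- which is exactly what your own phase-sum bound expresses. One therefore needs a limit theorem phrased in terms of \emph{integrated} characteristics (the paper uses Jacod--Shiryaev VII.3.7 for the driving process $Z^\lambda$ and Ethier--Kurtz Theorem~7.4.1, quoted as Lemma~\ref{ek_lemma}, for the solution), and, since the increments of the predictable quadratic variation are weighted by the state $X^\lambda(y-1)$, an Abel/partial-summation argument as in \eqref{partialsum} is needed to upgrade the convergence of $\lambda^2\sum_y(\bbE[\xi\xi](y)-C)$ to that of $\lambda^2\sum_y(\bbE[\xi\xi](y)-C)X_kX_l$. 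Your sketch does not supply this step, and it is not a routine omission.

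Second, your tightness argument rests on ``$\caX_L$ is uniformly bounded in every $L^p$ norm,'' which assumes moments the hypotheses do not grant: the $V(x,z)$ are only required to be centered with unit variance, and the paper explicitly remarks that no additional moment condition is imposed. Controlling the maximal jump $\sup_{s\le T}|\oX(s)-\oX(s-)|^2$ with only second moments is precisely the role of the paper's Lemma~\ref{lma_Emax} (the expected maximum of $n$ i.i.d.\ integrable variables is $o(n)$), combined with the a priori bound \eqref{eq: apriori bound xis} and the stopping times $T^\lambda_r$. Without that device, or an added moment hypothesis weakening the theorem, your jump-control and tightness step does not close.
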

\remark Existence and uniqueness of $\caA$ is a standard result, as all entries of $\caZ$ are Brownian motions. Note that no additional moment condition is needed on the random variables $V(x,z)$. 

The convergence of the hopping parameters $h_i$ to zero merely brings the resulting process $\caA$ into a isotropic form close to the DMPK process. It is however not essential for the scaling limit per se, and the same techniques used in the proof of Theorem~\ref{thm:convA} yield a limiting process for fixed $h_1, h_2\neq 0$, but a less isotropic one. For $\gamma>0$, 
\begin{equation*}
 \caY_{\gamma}(s) := \begin{pmatrix} \alpha(s) & \beta(s) \\ \beta(s)\str &{ \alpha'}(s)\end{pmatrix}
\end{equation*}
where $\alpha(s), \alpha'(s)$ and $\beta(s)$ differ from $\mfa(s),\mfa'(s)$ and $\mfb(s)$ only through their covariances, namely by the replacement
\begin{equation*}
\frac{1}{\sqrt{(4-E^2)N}} \quad \longrightarrow \quad \frac{1}{\sqrt{N \left|v^+_{\sigma_1\nu_1}v^+_{\sigma_2\nu_2}\right|}}\,.
\end{equation*}
For $\gamma=0$,
\begin{equation*}
 \caY_0(s) := \begin{pmatrix} \alpha(s) & \beta(s) \\[1mm] \overline{\beta(s)} &\overline{\alpha(s)}\end{pmatrix}
\end{equation*}
with the same substitution.
\begin{prop} \label{Prop:scalingNIsotropic}
Let $(\gamma, h_1, h_2)$ be fixed (in particular, not dependent on $\la$) and such that Lemma~\ref{lmawavevectors} holds. As $\la \to 0$,  the process $\left(A_\lambda\left(\left\lfloor\lambda^{-2}s\right\rfloor\right)\right)_{s\geq0}$ converges in distribution to $\left(\caG(s)\right)_{s\geq0}$ on the path space of $\bbC^{2N\times 2N}$-valued processes endowed with Skorhod topology. $\left(\caG(s)\right)_{s\geq0}$ is given as the unique solution for $s\geq0$ to
\begin{equation}
 \label{defcaB}
\begin{split}
\dd\caG(s)&=\dd\caY(s)\caG(s) \\
\caG(0)&=1. \\
\end{split}
\end{equation}
\end{prop}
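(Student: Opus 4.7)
The plan is to adapt the proof of Theorem~\ref{thm:convA} to this simpler setting where $(\gamma, h_1, h_2)$ are fixed, so that the chaoticity bound of Lemma~\ref{lmawavevectors} holds uniformly and no joint limit in the hopping parameters is needed. Starting from the one-step decomposition $M_n^{\la} = M_n^0 + \la D_n$, with $D_n := K^{-1}\delta T_n K$ linear in the random potential $V_n$, one derives the stochastic difference equation
\[
A^{\la}(n) - A^{\la}(n-1) = \la\, R_n^{\la}\, A^{\la}(n-1), \qquad R_n^{\la} := \Phi(n-1)(M_n^0)^{-1} D_n\,\Phi(n-1)^{-1},
\]
where $\Phi(n) := (M^0(n))^{-1}$ is block-diagonal with oscillating entries of the form $\ep{-\iu n k^{\si}_{\nu}}$. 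Since the $V_n$ are i.i.d.\ and centred, $\bbE(R_n^{\la}) = 0$, so that $A^{\la}$ carries a martingale structure on the time scale $n = \lfloor\la^{-2} s\rfloor$.

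Second, I would invoke a standard martingale functional limit theorem on Skorohod path space. The problem then reduces to verifying (a) convergence of the predictable quadratic covariation of the martingale part to the covariance of $\caY_\gamma$, and (b) a Lindeberg condition, which is immediate since the jumps are deterministically $\caO(\la)$ and $\bbE(V^2) = 1$. For~(a), each entry of $R_n^{\la}$ carries a rapidly oscillating phase $\ep{-\iu n (\vs_1 k^{\si_1}_{\nu_1} + \vs_2 k^{\si_2}_{\nu_2})}$ inherited from the conjugation by $\Phi(n-1)$. When $\caO(\la^{-2})$ such terms are summed pairwise to form the quadratic covariation, only the resonant quadruples $(\vs_i, \si_i, \nu_i)_{i=1}^{4}$ listed in~(\ref{exceptionTRI}) — and additionally those in~(\ref{exception}) when $\gamma = 0$ — contribute in the limit. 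The non-resonant sums are bounded by $\caO(\la^2/\cha(\gamma, h_1, h_2))$ and vanish since $\cha$ is a positive constant. The surviving contributions carry the normalising factors $1/\sqrt{|v^+_{\si_i \nu_i}|}$ inherited from $\Upsilon$, which produces precisely the covariance structure of $\caY_\gamma$; in the time-reversal invariant case $\gamma = 0$, the additional resonance~(\ref{exception}) couples the $(+,\nu)$ and $(-,-\nu)$ channels and yields the identifications $\mfa' = \overline{\mfa}$ together with the symmetry of $\mfb$.

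The main obstacle will be the detailed bookkeeping of these oscillating sums: one must enumerate all resonant quadruples and match the resulting limit entry-by-entry with the covariance matrix defining $\caY_\gamma$. Lemma~\ref{lmawavevectors} supplies the complete list of resonances for fixed $(\gamma, h_1, h_2)$, so no additional Diophantine control of $\cha$ is required, in contrast with Theorem~\ref{thm:convA} where $\cha$ must shrink together with $(h_1, h_2, \la)$. Finally, uniqueness for the limiting SDE~(\ref{defcaB}) is standard since $\caY$ is a matrix Brownian motion with bounded coefficients, and tightness of $(A^{\la})$ follows from the uniform $\caO(\la)$ bound on its one-step increments; together with (a) and (b), this closes the convergence in Skorokhod topology.
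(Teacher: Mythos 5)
Your proposal follows essentially the same route as the paper, which establishes Proposition~\ref{Prop:scalingNIsotropic} by observing that the proof of Theorem~\ref{thm:convA} goes through verbatim for fixed $(\gamma,h_1,h_2)$: the chaoticity is then a positive constant, the non-resonant oscillatory sums vanish as $\caO(\la^2/\cha)$, and since the velocities $v^+_{\sigma\nu}$ no longer converge to a $\nu$-independent limit the surviving resonant contributions produce exactly the covariance of $\caY_\gamma$, after which the same martingale functional limit theorem and the Ethier--Kurtz argument close the convergence. One minor caveat: the one-step jumps are $\la\|R_x\|$ with $\|R_x\|^2$ controlled only by $\sum_z V(x,z)^2$ (the potential is not assumed bounded), so the Lindeberg condition is verified via $\bbE(V(x,z)^2)=1$ and Lemma~\ref{lma_Emax} rather than a deterministic $\caO(\la)$ bound on the increments.
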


\subsection{Discussion} \label{sec: discussion}

Let us comment on the limiting process of transfer matrices $\caA(s)$, and compare it to the ideal ensemble $\caM$ of Section~\ref{sec: DMPK}. As already discussed in~\cite{BdR}, the overall factor $\sqrt{4-E^2}$ only corresponds to a redefinition of the mean free path and is irrelevant here. The major difference lies in the diagonal of the processes $\mathfrak{a}, \mathfrak{a}'$ generating $\caA$, which have \emph{perfectly correlated} diagonal elements, whereas they are independent of each other in their cousins generating $\caM$. In the case $\be=1$, an additional deviation can be found in the variance of the diagonal elements of $\mathfrak{b}$, which are smaller here than in the ideal case by a factor $\sqrt{2}\cdot\sqrt{N/(N+1)}$. Despite these differences, we have:
\begin{cor} If $\be=2$, the law of the process of transmission eigenvalues $(T_k)_{k=1}^N$ induced by $\caA$ is the same as that induced by $\caM$. 
\end{cor}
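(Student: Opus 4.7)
The plan is to show that the matrix-valued process $\caA_{++}^{*}\caA_{++}$ solves the same SDE in law as $\caM_{++}^{*}\caM_{++}$; since the transmission eigenvalues are by definition the eigenvalues of $(\caA_{++}^{*}\caA_{++})^{-1}$, this directly implies that the laws of $(T_{k})_{k=1}^{N}$ agree.

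The first step is to derive the SDE for $\caA_{++}^{*}\caA_{++}$ from $\dd\caA=\dd\caZ_{\gamma}\caA$ written in block form,
\begin{align*}
\dd\caA_{++}&=\dd\mathfrak{a}\,\caA_{++}+\dd\mathfrak{b}\,\caA_{-+},\\
\dd\caA_{-+}&=\dd\mathfrak{b}^{*}\,\caA_{++}+\dd\mathfrak{a}'\,\caA_{-+},
\end{align*}
and apply It\^o's formula. The antihermiticity $\dd\mathfrak{a}^{*}=-\dd\mathfrak{a}$ forces the two $\dd\mathfrak{a}$-martingale contributions to cancel in $\dd(\caA_{++}^{*}\caA_{++})$, leaving only a $\dd\mathfrak{b}$-martingale as driving noise; independence of $\mathfrak{a}$ and $\mathfrak{b}$, which holds both for $\caZ_{\gamma}$ and for $\caL$, also kills the It\^o cross term $\dd\mathfrak{a}\,\dd\mathfrak{b}$. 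What survives are drift contributions built from $(\dd\mathfrak{a})^{2}$ and $\dd\mathfrak{b}^{*}\dd\mathfrak{b}$ only; crucially the block $\mathfrak{a}'$ enters nowhere.

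The second step is to check that each surviving It\^o product has the same covariance in $\caA$ and in $\caM$, up to the harmless overall $s$-rescaling by $4-E^{2}$ mentioned in Section~\ref{sec: discussion}. For $\beta=2$ the only distributional discrepancy between $\caZ_{\gamma}$ and $\caL$ is the perfect correlation of the diagonal entries of $\mathfrak{a}$ among themselves and with those of $\mathfrak{a}'$. But the $(i,j)$-entry of $(\dd\mathfrak{a})^{2}$ equals $\sum_{k}\dd\mathfrak{a}_{ik}\dd\mathfrak{a}_{kj}$, so the only diagonal-diagonal product appearing there is $\dd\mathfrak{a}_{ii}^{2}$ (on $i=j$); the cross-products $\dd\mathfrak{a}_{ii}\dd\mathfrak{a}_{jj}$ with $i\neq j$ never occur. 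A short calculation then gives $(\dd\mathfrak{a})^{2}=-(4-E^{2})^{-1}\dd s\cdot 1$ for $\caZ_{\gamma}$ against $(\dd\mathfrak{a})^{2}=-\dd s\cdot 1$ for $\caL$, and the same matching (up to the same factor) holds for $\dd\mathfrak{b}^{*}\dd\mathfrak{b}$ and for the quadratic variation of the surviving martingale $\caA_{++}^{*}\dd\mathfrak{b}\,\caA_{-+}+\caA_{-+}^{*}\dd\mathfrak{b}^{*}\,\caA_{++}$.

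The third step uses the polar identity $\caA_{-+}^{*}\caA_{-+}=\caA_{++}^{*}\caA_{++}-1$, read off from the decomposition~\eqref{MSdecomp}, to rewrite the drift coefficient as a function of $\caA_{++}^{*}\caA_{++}$ alone. Combined with the covariance matching, this shows that $\caA_{++}^{*}\caA_{++}$ and $\caM_{++}^{*}\caM_{++}$ satisfy the same autonomous matrix SDE in distribution. Projecting onto the eigenvalues $T_{k}=1/S_{k}^{2}$ by It\^o's formula yields the DMPK SDE~\eqref{dmpkeq}, the bi-unitary invariance of $\dd\mathfrak{b}$ (which holds in the $\gamma>0$ case because its entries are i.i.d.\ complex Brownian motions) killing the dependence on the eigenvectors; uniqueness for~\eqref{dmpkeq}~\cite{Butzpaper} then gives equality of the eigenvalue laws. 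I expect the delicate point to be precisely this eigenvalue projection, where one must verify that the noise read in the spectral basis of $\caA_{++}^{*}\caA_{++}$ decouples into the $N$ prescribed independent Brownian motions $B_{k}$, and that a.s.\ non-degeneracy of the $T_{k}$ for $s>0$ legitimizes the application of It\^o's formula; by contrast the earlier cancellations are purely algebraic consequences of antihermiticity and of the block structure of $\caZ_{\gamma}$.
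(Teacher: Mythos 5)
Your argument is correct and is essentially the paper's own: write the SDE for $\caA_{++}\str\caA_{++}$, observe that the antihermitian blocks $\mathfrak{a},\mathfrak{a}'$ disappear from it (their only trace being the It\^o term $\dd\mathfrak{a}\str\dd\mathfrak{a}$, which you rightly check is insensitive to the correlated diagonal since $(\dd\mathfrak{a}\str\dd\mathfrak{a})_{ij}$ never contains a product $\dd\mathfrak{a}_{ii}\,\dd\mathfrak{a}_{jj}$ with $i\neq j$), and conclude because for $\beta=2$ the block $\dd\mathfrak{b}$ of $\caZ_\gamma$ coincides with the ideal one up to the harmless time rescaling by $4-E^2$. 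Your explicit treatment of $\dd\mathfrak{a}\str\dd\mathfrak{a}$ and of the closure of the SDE via $\caA_{-+}\str\caA_{-+}=\caA_{++}\str\caA_{++}-1$ fills in details the paper leaves implicit, but the route is the same.
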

Mathematically, this can be observed in the fact that the SDE for $\caM_{++}\str\caM_{++}$, obtained in a straightforward way by It\^o calculus,
\begin{equation*}
\dd (\caM_{++}\str\caM_{++} )= \caM_{++}\str \dd \mathfrak{b}\caM_{-+} + \caM_{-+}\str \dd \mathfrak{b}\str\caM_{++} + \caM_{++}\str\caM_{++} \dd s + \caM_{-+}\str\caM_{-+} \dd s
\end{equation*}
does not depend on $\dd \mathfrak{a}^\sharp$, and by recalling that the transmission eigenvalues are directly related to the eigenvalues $\lambda_k$ of this matrix. For $\beta=2$, the limiting $\dd \mathfrak{b}$ is exactly equal to the ideal one. The underlying physical reason is that the $\mathfrak{a}^\sharp$ blocks in the infinitesimal transfer matrix merely change the basis of left-, respectively right-moving channels. As such they do not change the magnitude of either the scattered or the reflected waves, hence they do not contribute to the transmission eigenvalues $T_{k}$. 

For the $\be=1$ case, the process of transmission eigenvalues induced by $\caA$ is different from that induced by $\caM$. 
Of course one may suspect that in the limit $N \to \infty$, these processes have similar features, in particular, that the variance of the conductance in the diffusive regime is the same in both cases, and equal to its universal value. In fact in the same limit, we even expect the non-isotropic processes of Proposition~\ref{Prop:scalingNIsotropic} to share the same property. This remains currently at the level of speculations, the reason being the relatively poor understanding of the properties of the DMPK equation itself in the large $N$ limit. This should be contrasted with recent efforts in the study of hermitian random matrices, where the Gaussian ensembles are very well-known and the challenge is to show that other ensembles share some of their properties, see e.g.\ \cite{ErYa12} and references therein.

\subsubsection{Symmetry considerations}

As already mentioned in Section \ref{sec: DMPK}, the  driving process $\caL$ in the SDE for $\caM$ satisfies the invariance property
 \beq  \label{eq: invariance}
 \caW  \caL \caW^{-1} \mathop{=}\limits^{d}  \caL
 \eeq 
 for 
 \beq
 \caW =  \begin{pmatrix} \caW_+ &0 \\  0 &\caW_-  \end{pmatrix}
 \eeq
 with $\caW_+, \caW_-$ arbitrary unitaries in the $\beta=2$ case and satisfying $\caW_-=\overline{\caW_+}$ in the $\beta=1$ case. In other words, all channel bases are assumed to be equivalent. It is exactly this equivalence that is lost in our model. This can be understood heuristically as follows. Here, the size of the impurities is assumed to be much smaller than the wavelength of the scattered waves, a fact expressed by the $\delta$-correlation in space of the potential, i.e.\ $V=\sum_y V_y$ with $y=(x,z)$ and $\bbE(V_yV_{y'})= \delta_{y,y'}$. Therefore, the position basis (or its dual, the momentum basis) is naturally singled out. There is no reason to expect another choice of basis to be equivalent and to allow for arbitrary unitaries $\caW_+,\caW_-$ in~\eqref{eq: invariance}. The symmetry that still survives (but in fact, only so because we performed the additional scaling limit $h_1, h_2 \to 0$) is the relabeling of channels; indeed the scaling limit $h_1, h_2 \to 0$ makes the group velocity of all channels $\nu$ equal at a given energy $E$. 
Up to phases, this corresponds to restricting  $\caW_+,\caW_-$ to be permutation matrices. One easily checks that, with this restriction, the driving process $\caZ_\ga$ still satisfies \eqref{eq: invariance}.

Another consequence of the locality of the impurities is captured by the following heuristic argument. Let us consider the kernel $S_{p,p'}$ of the scattering matrix in momentum basis, i.e.\ $p=(k,\nu)$.
  Since the scattering is weak (as already indicated, this is inherent to the setup and it is forced in our model by the $\la\to 0$ limit), multiple scattering can be neglected for short slabs of material, hence we can use the  Born approximation for $S_{p,p'}$:
   \beq
     S_{p,p'} =  \delta_{p,p'}-2 \pi \iu  \la   \sum_y  (\tilde V_y)_{p,p'} \delta(E(p)-E(p'))  + \caO(\la^2)
   \eeq
  where $(\tilde V_y)=V_ye^{\iu (p-p')y}$.   For $p=p'$, this expression depends on $p$ only through the factor $\delta(E(p)-E(p'))$ which contributes a $p$-dependent group velocity, a dependence which vanishes in the subsequent scaling limit $h_1,h_2 \to 0$. This means that the diagonal elements of the transmission matrix all coincide and this is exactly what we find in the ensemble $\caA(s)$, since in lowest order the $\mathfrak{a}$ block corresponds  to the transmission matrix.
  
Finally, we mention~\cite{RoSc10,SaSc10} where geometric methods are used to analyze a similar weak coupling limit of the Anderson model on tubes, and contact is made with random matrix theory. 
  

\section{Proof of transfer matrix limits} \label{sec: proofs}
We prove Theorem \ref{thm:convA}.

The evolution of $\left(A^\lambda(x)\right)_{x\geq0}$ is given by $A^\lambda(0)=1$ and the stochastic difference equation
\begin{align}
\label{evolA}
A^\lambda(x)-A^\lambda(x-1) &= \left((M^0(x))^{-1} K^{-1}T^\lambda_x K M^0(x-1)-1\right) A^\lambda(x-1) \nonumber \\
&=: \lambda Z_x A^\lambda(x-1),
\end{align}
where $Z_x = (M^0(x))^{-1} R_x M^0(x)$ and we defined
\begin{equation*}
\lambda R_x := K^{-1}T^\lambda_x K (M^0_x)^{-1} -1 = K^{-1}(T^\lambda_x-T^0_x) K (M^0_x)^{-1}.
\end{equation*}
Recall that $K$, introduced in the previous section, stands for the change from the position basis to the channel basis. It follows that 
\begin{equation*}
R_x = \Upsilon^{-1} \begin{pmatrix}-Q\str(P\str)^{-1}V_xQ & 0 \\ 0 & 0\end{pmatrix}\Upsilon(M^0_x)^{-1}
\end{equation*}
Using the explicit forms of $\Upsilon, Q$ and $P$, this matrix reads
\begin{equation}
\label{Rxsimplified}
 R_x=\iu \begin{pmatrix}\frac{1}{\sqrt{\left|v^+\right|}}Q^*V_xQ\frac{1}{\sqrt{\left|v^+\right|}}&\frac{1}{\sqrt{\left|v^+\right|}}Q^*V_xQ\frac{1}{\sqrt{\left|v^+\right|}}\Pi\\-\Pi\frac{1}{\sqrt{\left|v^+\right|}}Q^*V_xQ\frac{1}{\sqrt{\left|v^+\right|}}&-\Pi\frac{1}{\sqrt{\left|v^+\right|}}Q^*V_xQ\frac{1}{\sqrt{\left|v^+\right|}}\Pi\end{pmatrix},
\end{equation}
which of course satisfies $(R_{++})\str = -R_{++}$, $(R_{--})\str = -R_{--}$ and $(R_{-+})\str = R_{+-}$. \\
In the time reversal invariant case, since
\begin{equation*}
 Q=Q^T,\qquad Q\Pi=Q^*,\qquad\Pi Q^*=Q
\end{equation*}
this simplifies to
\begin{equation}
\label{Rxsimplifiedgamma0}
 R_x=\iu\begin{pmatrix}\frac{1}{\sqrt{\left|v^+\right|}}Q^*V_xQ\frac{1}{\sqrt{\left|v^+\right|}}&\frac{1}{\sqrt{\left|v^+\right|}}Q^*V_xQ^*\frac{1}{\sqrt{\left|v^+\right|}}\\-\frac{1}{\sqrt{\left|v^+\right|}}QV_xQ\frac{1}{\sqrt{\left|v^+\right|}}&-\frac{1}{\sqrt{\left|v^+\right|}}QV_xQ^*\frac{1}{\sqrt{\left|v^+\right|}}\end{pmatrix}\,.
\end{equation}
Since $Q^*=\overline{Q}$, we also have that $R_{--} =\overline{ R_{++}}$, and in a similar fashion $R_{-+} = \overline{R_{+-}}$.

Before we go further into the proof, let us explain the heuristics of the convergence to the DMPK equation. The matrix $R_x$ contains the $N$ i.i.d random variables $V(x,z)$, $z=1,\ldots,N$, and $R_x,R_y$ are independent for $x\neq y$. Under the appropriate technical conditions, we have convergence, as  $ \lambda\to0$;
\begin{equation*}
\lambda\sum_{x=0}^{\lfloor\lambda^{-2}s\rfloor}R_x \stackrel{\mathrm{d}}{\longrightarrow}\caR(s)\,.
\end{equation*}
where the $V(x,z)$ in $R_x$ are replaced by Brownian motions $B_z(s)$ in $\caR(s)$. Of course, this is nothing else than the convergence of a random walk to Brownian motion. Now, the matrices $Z_x$ generating the discrete process $A(x)$ do contain the highly oscillating phases of $M^0(x)$. Let
\begin{equation*}
Z^\lambda(s) = \lambda\sum_{x=0}^{\lfloor\lambda^{-2}s\rfloor}Z_x\,.
\end{equation*}
The correlation of any two matrix elements reads
\begin{multline}
\label{EZZ}
\bbE\left[\left(Z^\lambda(s)\right)_{mn}\left(Z^\lambda(s)\right)_{pr}\right] \\
=\lambda^{2}\sum_{x=1}^{\lfloor\lambda^{-2}s\rfloor}\exp\left(\iu x\left(-k^{\sigma_m}_{\sigma_m\nu_m}+k^{\sigma_n}_{\sigma_n\nu_n}-k^{\sigma_p}_{\sigma_p\nu_p}+k^{\sigma_r}_{\sigma_r\nu_r}\right)\right)\bbE\left[(R_x)_{mn}(R_x)_{pr}\right].
\end{multline}
where the $\sigma$'s and $\nu$'s denote the block and position in the block of a certain element, the `physical' momentum, due to the permutation $\Pi$ is not $\nu$, but $\sigma\nu$. The expectation in the r.h.s is independent of $x$, so that this sum is highly oscillatory and formally converges to a $\delta$ function. 
Hence, the phases create a limiting process $\caZ(s)$ with almost completely uncorrelated entries, apart from the exceptional conditions of Lemma~\ref{lmawavevectors}. As a result, the number of independent random variables in $\caZ(s)$ is of order $N^2$, whereas it was only $N$ in $\caR(s)$, a phenomenon called `noise explosion' in~\cite{viragvalkoexplosion}. Precisely, we prove:
\begin{lma}
 \label{lma_convergeZ}
Under the conditions of Theorem~\ref{thm:convA} and in the same topology, the process $\left(Z^\lambda(s)\right)_{s\geq0}$ converges in distribution to the process $\left(\caZ_\gamma(s)\right)_{s\geq0}$, for $\gamma\geq0$.
\end{lma}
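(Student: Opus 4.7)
The plan is to view $Z^\lambda(s)$ as a Donsker-type partial sum of independent, centered, matrix-valued increments and to identify the Gaussian limit by computing covariances. Since $\bbE[V(x,z)] = 0$ we have $\bbE[R_x] = 0$, and the deterministic left- and right-multiplication by $M^0(x)$ preserves centering, so $\{\lambda Z_x\}_{x=1}^{\lfloor\lambda^{-2}s\rfloor}$ is a triangular array of independent, centered random matrices. By \eqref{Rxsimplified}, each entry of $R_x$ is a linear combination of the $N$ variables $\{V(x,z)\}_{z}$ with coefficients that stay uniformly bounded (the $|v^\sigma_\nu|$ are bounded away from zero by \eqref{energyrestriction}). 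Hence each entry of $\lambda Z_x$ is in $L^2$ of order $\lambda$, and the Lindeberg condition
\begin{equation*}
\sum_{x=1}^{\lfloor\lambda^{-2}s\rfloor} \bbE\bigl[|(\lambda Z_x)_{mn}|^2 \mathbf{1}_{\{|(\lambda Z_x)_{mn}|>\varepsilon\}}\bigr] \longrightarrow 0
\end{equation*}
follows from dominated convergence, requiring only $\bbE[V(x,z)^2] < \infty$, consistent with the remark after Theorem~\ref{thm:convA}.

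The central step is the convergence of the covariance. For a generic pair of entries, \eqref{EZZ} produces a sum
\begin{equation*}
\lambda^2 \sum_{x=1}^{\lfloor\lambda^{-2}s\rfloor} e^{\iu x \Delta}\, \bbE\bigl[(R_x)_{mn}(R_x)_{pr}\bigr],
\end{equation*}
together with its analogue involving one complex conjugate, where $\Delta$ is a $\pm 1$ integer combination of four wavevectors $k^\sigma_{\sigma\nu}$ indexed by $(m,n,p,r)$. Lemma~\ref{lmawavevectors} then gives a dichotomy. Either $\Delta \equiv 0 \pmod{2\pi}$, and the quadruple of indices falls into the trivial configuration \eqref{exceptionTRI} (and, when $\gamma=0$, also \eqref{exception}); the sum then reduces to $\lambda^2 \lfloor\lambda^{-2}s\rfloor\,\bbE[(R_1)_{mn}(R_1)_{pr}] \to s \cdot \bbE[(R_1)_{mn}(R_1)_{pr}]$. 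Or $\Delta \not\equiv 0$, and the geometric sum is bounded in modulus by $C\lambda^2/|1 - e^{\iu\Delta}| \leq C\lambda^2/\cha(\gamma,h_1(\lambda),h_2(\lambda))$, which vanishes by the third hypothesis in \eqref{tozero}. Since $N$ is fixed, there are only finitely many index quadruples, so the non-resonant contributions vanish uniformly in $s$.

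It remains to match the surviving resonant covariances with the prescribed structure of $\caZ_\gamma$. Using \eqref{Rxsimplified} and $\bbE[V(x,z)V(x,z')]=\delta_{z,z'}$, the resonances produce: antisymmetry within the $(++)$ and $(--)$ blocks from $R_{++}^* = -R_{++}$; the perfect correlation $\mfa'_{\mu\mu} = -\mfa_{\mu\mu}$, which arises because the $(++)$ and $(--)$ diagonal blocks of $R_x$ in \eqref{Rxsimplified} are literally $\pm Q^* V_x Q$ conjugated by identical factors, so they carry the same scalar noise with opposite sign; and, for $\gamma=0$, the further relations $\mfa' = \overline{\mfa}$ and $\mfb = \mfb^T$, both visible in \eqref{Rxsimplifiedgamma0} through $Q^T = Q$. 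The normalization $1/\sqrt{(4-E^2)N}$ emerges from $1/\sqrt{|v^+_\nu v^+_{\nu'}|} \to 1/\sqrt{4-E^2}$ as $h_1,h_2 \to 0$, since then the dispersion reduces to $E = 2\cos k$ with $|v^\pm| = \sqrt{4-E^2}$. Finite-dimensional convergence follows by Cram\'er--Wold and the CLT for independent centered triangular arrays, and tightness in the Skorohod topology is standard given the Lindeberg condition and the uniform covariance bound.

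The main obstacle is the entry-by-entry bookkeeping: tracking which of the many index quadruples produces a resonance, and how complex conjugation, the permutation $\Pi$, and the case distinction between $\gamma>0$ and $\gamma=0$ (where \eqref{exception} permits additional resonances) combine to produce exactly the covariance structure of $\caZ_\gamma$. In particular, the perfectly correlated diagonals $\mfa_{\mu\mu} = -\mfa'_{\mu\mu}$, flagged as the qualitative deviation from the DMPK ensemble in Section~\ref{sec: discussion}, is not an artifact of any averaging but a rigid algebraic identity inherited from the microscopic scattering matrix \eqref{Rxsimplified} and preserved by the phase rotation.
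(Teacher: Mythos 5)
Your proposal is correct and follows essentially the same route as the paper: the heart of both arguments is the resonance analysis of the oscillatory covariance sum \eqref{EZZ} via Lemma~\ref{lmawavevectors} and the chaoticity bound $2\lambda^2/\cha$, followed by the identification of the surviving covariances (including the perfectly correlated diagonals and the $\gamma=0$ symmetries) with those of $\caZ_\gamma$. The only cosmetic difference is the final functional-CLT step, where the paper checks the three characteristics of the martingale invariance principle in Jacod--Shiryaev (Ch.~VII, Thm.~3.7) while you invoke Lindeberg plus Cram\'er--Wold and tightness; these are interchangeable here.
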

\begin{proof}
We first recall that $Q$ is the matrix of transversal plane waves so that
\begin{equation*}
(Q\str V_x Q)_{\mu\nu} = \widehat{V_x}(\mu-\nu)\,.
\end{equation*}
In particular, all elements on the diagonal are perfectly correlated.

Now, we consider~(\ref{EZZ}) for $s\geq0$, and note that the wavevectors $k_\nu^\sigma$ and the matrices $R_x$ all depend on $\lambda$ implicitly through the dependence $h_i(\lambda)$. However,  $\bbE\left[(R_x)_{mn}(R_x)_{pr}\right]$ is independent of $x$ and can be taken out of the sum. For all choices $\{(\nu_i,\sigma_i): i=m,n,p,r\}$ for which the exponent does not vanish, we have that
\begin{equation*}
\left| \lambda^{2}\sum_{x=1}^{\left\lfloor\lambda^{-2}s\right\rfloor}\exp\left(ix\left(-k^{\sigma_m}_{\sigma_m\nu_m}+k^{\sigma_n}_{\sigma_n\nu_n}-k^{\sigma_p}_{\sigma_p\nu_p}+k^{\sigma_r}_{\sigma_r\nu_r}\right)\right)\right|\leq\frac{2\lambda^2}{\cha(\lambda)}\longrightarrow0,
\end{equation*}
as $\lambda\rightarrow0$, by Assumption~(\ref{tozero}). By Lemma \ref{lmawavevectors}, the the wavenumbers cancel out whenever
\begin{equation}
\label{sigmanu1}
 \left(\sigma_m,\nu_m\right)=\left(\sigma_n,\nu_n\right)\mbox{ and }\left(\sigma_p,\nu_p\right)=\left(\sigma_r,\nu_r\right)\,,
\end{equation}
corresponding to any two diagonal elements in $Z^\lambda$, or
\begin{equation}
\label{sigmanu2}
 \left(\sigma_m,\nu_m\right)=\left(\sigma_r,\nu_r\right)\mbox{ and }\left(\sigma_p,\nu_p\right)=\left(\sigma_n,\nu_n\right)\,,
\end{equation}
namely two mutually `transpose' entries of $Z^\lambda$. In the case $\gamma=0$, a last possibility is given by
\begin{equation}
\label{sigmanu3}
 \left(\sigma_m,\nu_m\right)=\left(-\sigma_p,\nu_p\right)\mbox{ and }\left(\sigma_n,\nu_n\right)=\left(-\sigma_r,\nu_r\right)\,,
\end{equation}
corresponding to two elements in the same position, but within the opposite blocks. In these three cases and for all $\lambda$,
\begin{equation*}
 \lambda^{2}\sum_{x=1}^{\left\lfloor\lambda^{-2}s\right\rfloor}1=s.
\end{equation*}

Furthermore, using the explicit form~(\ref{Rxsimplified}) of $R_x$ and the definition of $Q$, 
\begin{multline*}
\bbE\left[(R_x)_{mn}(R_x)_{pr}\right]
=-\frac{\sigma_m\sigma_p}{N^2\sqrt{\left|v^+_{\sigma_m\nu_m}v^+_{\sigma_n\nu_n}v^+_{\sigma_p\nu_p}v^+_{\sigma_r\nu_r}\right|}} \\ 
\cdot\sum_{z=0}^{N-1}\exp\left(\frac{2\pi i}{N}z\left(-\sigma_m\nu_m+\sigma_n\nu_n-\sigma_p\nu_p+\sigma_r\nu_r\right)\right)
\end{multline*}
which we only need to evaluate in the `stationary phase' situations (\ref{sigmanu1}), (\ref{sigmanu2}), and (\ref{sigmanu3}). In all these cases,
\begin{equation}
\label{fourvs}
\bbE\left[(R_x)_{mn}(R_x)_{pr}\right] = -\frac{\sigma_m\sigma_p}{N\sqrt{\left|v^+_{\sigma_m\nu_m}v^+_{\sigma_n\nu_n}v^+_{\sigma_p\nu_p}v^+_{\sigma_r\nu_r}\right|}}.
\end{equation}
Note that the $v^+_{\nu}$ still depend on $\lambda$ and $\nu$, but by (\ref{tozero}), they converge to the $\nu$-independent limit $|2\sin(k)|$ with $2\cos(k)=E$. Hence, in the limit $\lambda\rightarrow0$, we have
\begin{equation*}
\bbE\left[(R_x)_{mn}(R_x)_{pr}\right] \longrightarrow -\frac{\sigma_m\sigma_p}{(4-E^2)N}\,.
\end{equation*}
A very similar oscillatory sum appears in $\bbE\left[\left(Z^\lambda(s)\str\right)_{mn}\left(Z^\lambda(s)\right)_{pr}\right]$, with parallel conclusions. In summary, for all $s\geq0$
\begin{equation}
\label{convbracket}
 \begin{split}
  \lim_{\lambda\rightarrow0}\bbE\left[\left(Z^\lambda(s)\right)_{mn}\left(Z^\lambda(s)\right)_{pr}\right]
  &=\int_0^s\dd\left\langle \left(\caZ\right)_{mn}, \left(\caZ\right)_{pr}\right\rangle_t,\\
\lim_{\lambda\rightarrow0}\bbE\left[\left(Z^\lambda(s)\str\right)_{mn}, \left(Z^\lambda(s)\right)_{pr}\right]
&=\int_0^s\dd\left\langle \left(\caZ^*\right)_{mn},\left(\caZ\right)_{pr}\right\rangle_t.
 \end{split}
\end{equation}
with $\caZ$ given in the previous section, and $\langle M,N \rangle_t$ the bracket process of two martingales $M,N$.  In particular, the block structure of $\caZ$ arises from the corresponding relations noted above in $R_x$. Moreover, perfect correlation of the diagonal elements of $Q\str V_x Q$ and the exceptional case~(\ref{sigmanu1}) imply the perfect correlation of the diagonal elements of $\dd \mfa$ and $\dd \mfa'$. All other exceptional cases impose the correlations $\vert \dd \mfb_{\mu\nu}\vert^2$ and $\vert \dd \mfa_{\mu\nu}\vert^2$.

Now the lemma follows as a simple generalization of Donsker's invariance principle, for example by using Chapter VII, Theorem 3.7 in~\cite{js}. They check the convergence of three characteristics, of which, in their notation, $\left[\sup-\beta_3'\right]$ is trivially fulfilled as $Z^\lambda$ and $\caZ$ are martingales, $\left[\gamma_3'-\bbR^+\right]$ is the convergence of brackets as shown in~(\ref{convbracket}), and $\left[\delta_{3,1}-\bbR^+\right]$ is a simple estimate on the jumps of $Z^\lambda$, namely
\begin{equation*}
 \lim_{\lambda\rightarrow0}\sum_{x=1}^{\lfloor\lambda^{-2}s\rfloor}\mean{g_a\left(\lambda\left\|R_x\right\|\right)}=0
\end{equation*}
for all $g_a(y)=y^21_{\lbrace|y|>a\rbrace}$, $a>0$, which is trivial by $\mean{v^2}=1$.
\end{proof}

Notice that the covariances of the less isotropic ensemble of Proposition~\ref{Prop:scalingNIsotropic} can be read off from this proof. The main theorem will now follow from this lemma and the difference equation~(\ref{evolA}).
\begin{proof}[Proof of Theorem~\ref{thm:convA}]
We simplify the notation of (\ref{evolA}) and (\ref{defcaA}) by writing the real and imaginary parts of the matrix entries as elements of vectors in $\bbR^d$, $d=8N^2$. In this notation (\ref{evolA}) reads
\begin{equation}
\label{evolX}
 X^{\lambda}_j(y)-X^{\lambda}_j(y-1)=\lambda\sum_{k=1}^d\xi^{\lambda}_{jk}(y)X^{\lambda}_k(y-1)
\end{equation}
for all $y$ in $\bbN$, with $\xi^{\lambda}(y)\in\bbR^{d\times d}$ independent  of $X^{\lambda}(z), \xi^{\lambda}(z)$ $z\in\lbrace0,...,y-1\rbrace$. Because of the phase factors, the law of $\xi^{\lambda}(y)$  is not independent of $y\in\bbN$, but $\bbE(\xi^{\lambda}(y))=0$ and
\begin{equation} \label{eq: apriori bound xis}
 \|\xi^{\lambda}(y)\|^2  \leq  c'   (\sum_{z=1}^N \streep v(y,z)\streep)^2  \leq c  \sum_{z=1}^N v^2(y,z)
\end{equation}
where $c,c'< \infty$ are $\lambda$-independent for sufficiently small $\lambda$, but they depend on $N$.
This follows from (\ref{Rxsimplified}) or (\ref{Rxsimplifiedgamma0}) by noting that the velocity matrix $v^+$ converges to a nonsingular limit as $\lambda\rightarrow0$; we will henceforth assume without comment that $\lambda$ is sufficiently small. Furthermore, we know from the proof of Lemma \ref{lma_convergeZ}
\begin{equation}
\label{convxi}
 \lim_{\lambda\rightarrow0}\lambda^2\sum_{y=1}^{\left\lfloor\lambda^{-2}s\right\rfloor}\mean{\xi_{ik}^{\lambda}(y)\xi_{jl}^{\lambda}(y)}=C_{ikjl}\cdot s
\end{equation}
uniformly for $s\geq0$ from bounded intervals. 

If we define $\left(\caB(s)\right)_{s\geq0}$ as the matrix-valued Brownian motion with bracket process
\begin{equation*}
 \left\langle\caB_{ik},\caB_{jl}\right\rangle_s= C_{ikjl}\cdot s,
\end{equation*}
the equation (\ref{defcaA}) transforms to
\begin{equation*}
 \dd \caX_j(s)=\sum_{k=1}^d\dd\caB_{jk}\caX_k(s).
\end{equation*}
 The initial values for $X^{\lambda}$ and $\caX$ are identical and deterministic, the $\bbR^d$ vector corresponding to the unit matrix $1_{2N}$. For notational convenience, we have chosen $X^{\lambda}$ to still live on the microscopic, discrete space, what we really want to investigate is the cadlag process
\begin{equation*}
 \overline{X}^{\lambda}(s)=\rX\left(\left\lfloor\lambda^{-2}s\right\rfloor\right).
\end{equation*}
For cadlag processes we define $\overline{X}^{\lambda}(s-)$ as the leftside limit of $\overline{X}^{\lambda}$ at $s$. With the filtration $\caF^{\lambda}_s=\sigma\left\lbrace\overline{X}^{\lambda}(t):t\leq s\right\rbrace$, $\left(\oX(s)\right)_{s\geq0}$ is a $\left\lbrace\caF^{\lambda}_s \right\rbrace$-martingale. Furthermore defining
\begin{equation*}
 \rV_{ij}(y)=\lambda^2\sum_{x=1}^y\sum_{k,l=1}^d\mean{\xi_{ik}^{\lambda}(x)\xi_{jl}^{\lambda}(x)}\rX_k(x-1)\rX_l(x-1)
\end{equation*}
and the corresponding macroscopic
\begin{equation*}
 \oV_{ij}(s)=\rV\left(\left\lfloor\lambda^{-2}s\right\rfloor\right)
\end{equation*}
for all $i,j=1,...,d$, the process $\oX_i\oX_j-\oV_{ij}$ is a $\left\lbrace\caF^{\lambda}_s \right\rbrace$-martingale as well, and we have by Theorem 7.4.1, \cite{ethierkurtz},
\begin{lma}
\label{ek_lemma}
 If for any $T>0$, and any stopping time 
\begin{equation*}
 T\upl_r=\inf\left\lbrace s:\left|\oX(s)\right|\geq r \mbox{ or } \left|\oX(s-)\right|\geq r\right\rbrace,
\end{equation*}
$r>0$, for all $i,j=1,...,d$
\begin{align}
\label{estjump}
 &\lim_{\lambda\rightarrow0}\mean{\sup_{s\leq T\wedge T\upl_r}\left|\oX(s)-\oX(s-)\right|^2}=0\\
\label{brackjump}
&\lim_{\lambda\rightarrow0}\mean{\sup_{s\leq T\wedge T\upl_r}\left|\oV_{ij}(s)-\oV_{ij}(s-)\right|}=0
\end{align}
and
\begin{equation}
\label{ekbrackconv}
 \sup_{s\leq T\wedge T\upl_r}\left|\oV_{ij}(s)-\int_0^s\dd t\sum_{k,l=1}^dC_{ikjl}\oX_k(t)\oX_l(t)\right|\stackrel{\bbP}{\rightarrow}0,
\end{equation}
then $\left(\oX(s)\right)_{s\geq0}$ converges in distribution on $D_{\bbR^d}[0,\infty)$ to $\left(\caX(s)\right)_{s\geq0}$.
\end{lma}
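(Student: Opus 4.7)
The statement is essentially a transcription of Theorem 7.4.1 of \cite{ethierkurtz} to our finite-dimensional, constant-coefficient setting, so my plan is to identify the ingredients of that theorem in the present notation and invoke it directly.

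First, I would fix the limit candidate. The process $\caX$ is the unique strong solution of the linear SDE $\dd\caX_i = \sum_k \dd\caB_{ik}\,\caX_k$, where $\caB$ is the $d\times d$ matrix-valued Brownian motion with bracket $\langle \caB_{ik},\caB_{jl}\rangle_s = C_{ikjl}s$. Since the coefficients are linear, existence, uniqueness, and non-explosion are standard, and $\caX$ is equivalently characterized as the unique solution of the martingale problem associated with the generator
\begin{equation*}
\caG f(x) = \frac{1}{2}\sum_{i,j,k,l} C_{ikjl}\,x_k x_l\,\partial_i\partial_j f(x),
\end{equation*}
acting on smooth compactly supported $f:\bbR^d\to\bbR$, with initial condition the deterministic vector corresponding to $1_{2N}$.

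Next, I would cast the discrete process in the framework of \cite{ethierkurtz}. By construction $\oX$ is a cadlag $\{\caF^\lambda_s\}$-martingale, and the product $\oX_i\oX_j - \oV_{ij}$ is also a martingale, so $\oV$ plays the role of the predictable quadratic variation of $\oX$. Hypothesis (\ref{estjump}) guarantees that the jumps of $\oX$ vanish in $L^2$ on compact time intervals stopped at $T\upl_r$; hypothesis (\ref{brackjump}) is the analogous statement for $\oV$; and hypothesis (\ref{ekbrackconv}) asserts that the predictable bracket converges in probability to the quadratic characteristic $\int_0^s\sum_{kl}C_{ikjl}\oX_k(t)\oX_l(t)\,\dd t$ dictated by the target martingale problem. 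These are precisely the three hypotheses of Theorem 7.4.1. Applying it yields tightness of $\{\oX\}_\lambda$ on $D_{\bbR^d}[0,\infty)$ and identifies every subsequential weak limit as a solution of the martingale problem; uniqueness of that solution then forces the limit to be $\caX$.

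The step requiring most care is the use of the localizing stopping times $T\upl_r$: the coefficients $x_kx_l$ in $\caG$ are unbounded, so convergence is first obtained for the stopped processes $\oX(\cdot\wedge T\upl_r)$ at each fixed $r$, and only afterwards is $r$ sent to infinity. The latter passage relies on the non-explosion of $\caX$, which is immediate here from the linear growth of the SDE coefficients and the integrability of the Gaussian driver $\caB$ (a standard Gronwall estimate bounds $\bbE\|\caX(s)\|^2$ uniformly on compact time intervals). This localization argument, together with the verification that the one-dimensional distributions of the stopped limit are consistent in $r$, is the only genuine obstacle in reducing the statement to the Ethier--Kurtz theorem, and it proceeds along the textbook lines of Ch.~VII of that reference.
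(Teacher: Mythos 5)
Your proposal is correct and matches the paper's treatment: the lemma is stated there precisely as an instance of Theorem 7.4.1 of Ethier--Kurtz, with no further proof supplied, and your identification of the martingale problem, the role of $\oV$ as the predictable bracket, and the localization via the stopping times $T\upl_r$ is exactly what that citation encodes. The extra care you take with the passage $r\to\infty$ and non-explosion of the linear SDE is a sound (and slightly more explicit) rendering of the same argument.
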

\noindent{}So to prove Theorem \ref{thm:convA}, we only have to verify the conditions of Lemma \ref{ek_lemma}. 

%
%
 We start with the following observation,
\begin{lma}
\label{lma_Emax}
 \label{lma_L1}
Let $Z_k$, $k\in\bbN$ be  i.i.d. distributed, positive random variables, with
\begin{equation*}
\mean{Z_1}<\infty.
\end{equation*}
Then
\begin{equation*}
 \frac{1}{n}\mean{\max_{1\leq k\leq n}Z_k}\rightarrow0
\end{equation*}
as $n\rightarrow\infty$.
\end{lma}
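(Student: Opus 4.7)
My plan is to prove this via a standard truncation argument that reduces the maximum to a sum after cutting the tail of the distribution. The key observation is that, for any threshold $M>0$, we have the pointwise bound
\begin{equation*}
\max_{1\leq k\leq n} Z_k \leq M + \sum_{k=1}^n Z_k \mathbf{1}_{\{Z_k > M\}},
\end{equation*}
since whenever the maximum exceeds $M$ it is dominated by the single corresponding term of the sum, and otherwise the first term on the right-hand side already suffices.

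Taking expectations and using that the $Z_k$ are i.i.d., this yields
\begin{equation*}
\frac{1}{n}\mean{\max_{1\leq k\leq n} Z_k} \leq \frac{M}{n} + \mean{Z_1 \mathbf{1}_{\{Z_1 > M\}}}.
\end{equation*}
I would then let $n\to\infty$ for fixed $M$, obtaining $\limsup_{n\to\infty} \frac{1}{n}\mean{\max_{1\leq k\leq n} Z_k} \leq \mean{Z_1 \mathbf{1}_{\{Z_1 > M\}}}$, and finally let $M\to\infty$, invoking dominated convergence (justified by $\mean{Z_1}<\infty$) to conclude that the right-hand side tends to zero.

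There is no real obstacle here; the only subtlety is choosing the order of limits correctly ($n$ first, then $M$). The integrability hypothesis $\mean{Z_1}<\infty$ is used exactly once, in the final dominated convergence step, and it is both necessary and sufficient for the conclusion.
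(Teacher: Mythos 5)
Your proof is correct, and it takes a genuinely different route from the paper's. You reduce the maximum to a sum via the truncation bound $\max_{k\leq n} Z_k \leq M + \sum_{k=1}^n Z_k \mathbf{1}_{\{Z_k>M\}}$, use the i.i.d.\ structure to get $\tfrac1n\mean{\max_{k\leq n}Z_k}\leq \tfrac{M}{n}+\mean{Z_1\mathbf{1}_{\{Z_1>M\}}}$, and then send $n\to\infty$ before $M\to\infty$; the order of limits is indeed the only delicate point and you handle it correctly. The paper instead works with the tail function $q(x)=\bbP(Z_1\geq x)$ and the layer-cake identity, writing $\tfrac1n\mean{\max_{k\leq n}Z_k}=\int_0^\infty \tfrac{1-(1-q(x))^n}{n}\,\dd x$, where the integrand tends to zero pointwise (being at most $1/n$) and is dominated by the integrable $q$ via the union bound $1-(1-q)^n\leq nq$; a single application of dominated convergence then finishes the argument. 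Both proofs are short and use integrability of $Z_1$ exactly once through a dominated-convergence step; yours is arguably more elementary in that it bypasses the tail-integral representation of the expectation, at the cost of an iterated limit, while the paper's version packages everything into one integral. Your closing remark that $\mean{Z_1}<\infty$ is also necessary is correct, since otherwise $\mean{\max_{k\leq n}Z_k}\geq\mean{Z_1}=\infty$ for every $n$.
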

\begin{proof}  
With
\begin{equation*}
 q(x)=\bbP\left(Z_1\geq x\right)
\end{equation*}
for $x\geq0$, we have
\begin{equation*}
 \int_0^\infty q(x)\dd x=\mean{Z_1},
\end{equation*}
while
\begin{equation*}
 \bbP\left(\max_{1\leq k\leq n}Z_k\geq x\right)=1-(1-q(x))^n\leq n q(x).
\end{equation*}
Thus,
\begin{equation*}
 \frac{1}{n}\mean{\max_{1\leq k\leq n}Z_k}=\int_0^\infty\frac{1-(1-q(x))^n}{n}\dd x
\end{equation*}
with the integrand on the right side converging to zero and dominated by the integrable $q(x)$, an the claim follows by dominated convergence.
\end{proof}
\noindent{}For any $T>0, r>0$ given, we have
\beq
\begin{split}
\label{proofjump}
\sup_{s\leq T\wedge T\upl_r}&\left|\oX(s)-\oX(s-)\right|^2\\
=&\max_{1\leq y\leq\left\lfloor\lambda^{-2} \left(T\wedge T\upl_r\right)\right\rfloor}\left|\rX(y)-\rX(y-1)\right|^2\\
=&\max_{1\leq y\leq\left\lfloor\lambda^{-2} \left(T\wedge T\upl_r\right)\right\rfloor}\lambda^2\sum_{j,k,l=1}^d\xi\upl_{jk}(y)\xi\upl_{jl}(y)\rX_{k}(y-1)\rX_{l}(y-1)\\
\leq&r^2\lambda^2\max_{1\leq y\leq\left\lfloor\lambda^{-2}T\right\rfloor}\left\|\xi\upl(y)\right\|^2\\
\leq&c(\lambda)r^2\lambda^2\max_{1\leq y\leq\left\lfloor\lambda^{-2}T\right\rfloor}\sum_{z=0}^{N-1}v(y,z)^2,
\end{split}
\eeq
where we used \eqref{eq: apriori bound xis} in the last line. Now the last line of (\ref{proofjump}) vanishes in expectation by Lemma \ref{lma_Emax} and the fact that $\bbE(v(y,z)^2)=1$. This proves (\ref{estjump}). For the proof of (\ref{brackjump}), note
\begin{equation*}
\begin{split}
 \sup_{s\leq T\wedge T\upl_r}&\left|\oV_{ij}(s)-\oV_{ij}(s-)\right|\\
=&\max_{1\leq y\leq\left\lfloor\lambda^{-2} \left(T\wedge T\upl_r\right)\right\rfloor}\left|\rV_{ij}(y)-\rV_{ij}(y-1)\right|\\
=&\lambda^2\max_{1\leq y\leq\left\lfloor\lambda^{-2} \left(T\wedge T\upl_r\right)\right\rfloor}\left|\sum_{k,l=1}^d\mean{\xi_{ik}^{\lambda}(y)\xi_{jl}^{\lambda}(y)}\rX_k(y-1)\rX_l(y-1)\right|\\
\leq&\lambda^2r^2\max_{1\leq y\leq\left\lfloor\lambda^{-2} T\right\rfloor}\mean{\left\|\xi\upl(y)\right\|^2},
\end{split}
\end{equation*}
which obviously vanishes in expectation as $\lambda\rightarrow0$.

For (\ref{ekbrackconv}), start with
\begin{equation}
\label{inttosum}
\begin{split}
 \oV_{ij}(s)&-\int_0^s\dd t\sum_{k,l=1}^dC_{i,k,j,l}\oX_k(t)\oX_l(t)\\
=&\sum_{k,l=1}^d\left(\lambda^2\sum_{y=1}^{\left\lfloor\lambda^{-2}s\right\rfloor}\left(\mean{\xi_{ik}^{\lambda}(y)\xi_{jl}^{\lambda}(y)}-C_{ikjl}\right)\rX_k(y-1)\rX_l(y-1)\right)\\
&\hspace{2cm}-\left(s-\lambda^2\left\lfloor\lambda^{-2}s\right\rfloor\right)\sum_{k,l=1}^dC_{ikjl}\rX_k\left(\left\lfloor\lambda^{-2}s\right\rfloor\right)\rX_l\left(\left\lfloor\lambda^{-2}s\right\rfloor\right).
\end{split}
\end{equation}
For the last line, we have
\begin{equation*}
\begin{split}
 \sup_{s\leq T\wedge T\upl_r}&\left|\left(s-\lambda^2\left\lfloor\lambda^{-2}s\right\rfloor\right)\sum_{k,l=1}^dC_{ikjl}\rX_k\left(\left\lfloor\lambda^{-2}s\right\rfloor\right)\rX_l\left(\left\lfloor\lambda^{-2}s\right\rfloor\right)\right|\\
&\leq\lambda^2r^2d^2\max_{k,l}\left|C_{ikjl}\right|\rightarrow0
\end{split}
\end{equation*}
almost surely, and thus in probability, as $\lambda\rightarrow0$. We omit the (finite) $k,l$ sum in (\ref{inttosum}) from our notation, and use partial summation with respect to $y$ to obtain
\begin{equation}
\label{partialsum}
 \begin{split}
  \lambda^2&\sum_{y=1}^{\left\lfloor\lambda^{-2}s\right\rfloor}\left(\mean{\xi_{ik}^{\lambda}(y)\xi_{jl}^{\lambda}(y)}-C_{ikjl}\right)\rX_k(y-1)\rX_l(y-1)\\
&=\lambda^2\rX_k\left(\left\lfloor\lambda^{-2}s\right\rfloor\right)\rX_l\left(\left\lfloor\lambda^{-2}s\right\rfloor\right)\sum_{y=1}^{\left\lfloor\lambda^{-2}s\right\rfloor}\left(\mean{\xi_{ik}^{\lambda}(y)\xi_{jl}^{\lambda}(y)}-C_{ikjl}\right)\\
&\hspace{2mm}-\lambda^2\sum_{y=1}^{\left\lfloor\lambda^{-2}s\right\rfloor}\sum_{x=1}^{y}\left(\mean{\xi_{ik}^{\lambda}(x)\xi_{jl}^{\lambda}(x)}-C_{ikjl}\right)\left(\rX_k(y)\rX_l(y)-\rX_k(y-1)\rX_l(y-1)\right).
 \end{split}
\end{equation}
We know from the convergences (\ref{convxi}) and
\begin{equation}
\label{convc}
 \lim_{\lambda\rightarrow0}\lambda^2\sum_{x=1}^{\left\lfloor\lambda^{-2}s\right\rfloor}C_{ikjl}=C_{ikjl}\cdot s,
\end{equation}
which are both uniform for $s$ from compact sets, that
\begin{equation*}
 a^{\lambda}_{ikjl}(y):=\lambda^2\sum_{x=1}^{y}\left(\mean{\xi_{ik}^{\lambda}(x)\xi_{jl}^{\lambda}(x)}-C_{ikjl}\right)\rightarrow0
\end{equation*}
as $\lambda\rightarrow0$ uniformly in $y$ as long as $1\leq y\leq\left\lfloor\lambda^{-2}T\right\rfloor$ for fixed positive $T$.

Thus for the first term on the right-hand side in (\ref{partialsum}),
\begin{equation*}
\begin{split}
 \sup_{s\leq T\wedge T\upl_r}&\left|\lambda^2\rX_k\left(\left\lfloor\lambda^{-2}s\right\rfloor\right)\rX_l\left(\left\lfloor\lambda^{-2}s\right\rfloor\right)\sum_{y=1}^{\left\lfloor\lambda^{-2}s\right\rfloor}\left(\mean{\xi_{ik}^{\lambda}(y)\xi_{jl}^{\lambda}(y)}-C_{ikjl}\right)\right|\\
&\leq r^2\sup_{s\leq T}\max_{i,k,j,l}\left|a^{\lambda}_{ikjl}\left(\left\lfloor\lambda^{-2}s\right\rfloor\right)\right|\rightarrow0
\end{split}
\end{equation*}
as $\lambda\rightarrow0$. After plugging (\ref{evolX}) into the second term on the right-hand side of (\ref{partialsum}), we are left with the sum of
\begin{equation*}
\begin{split}
 \sup_{s\leq T\wedge T\upl_r}\left|\lambda\sum_{y=1}^{\left\lfloor\lambda^{-2}s\right\rfloor}a^{\lambda}_{ikjl}(y)\left(\sum_{k'=1}^d \xi\upl_{kk'}(y)\rX_{k'}(y-1)\rX_l(y-1)\right.\right.\\
+\left.\left.\sum_{l'=1}^d \xi\upl_{ll'}(y)\rX_{k}(y-1)\rX_{l'}(y-1)\right)\vphantom{\sum_{y=1}^{\left\lfloor\lambda^{-2}s\right\rfloor}}\right|
\end{split}
\end{equation*}
and
\begin{equation*}
\begin{split}
 \sup_{s\leq T\wedge T\upl_r}\left|\lambda^2\sum_{y=1}^{\left\lfloor\lambda^{-2}s\right\rfloor}a^{\lambda}_{ikjl}(y)\left(\sum_{k',l'=1}^d \xi\upl_{kk'}(y)\xi\upl_{ll'}(y)\rX_{k'}(y-1)\rX_{l'}(y-1)\right)\right|\\
\end{split}
\end{equation*}
converging to zero in $L^2(\bbP)$ and $L^1(\bbP)$, respectively.

 \end{proof}

\section*{Acknowledgements}
Maximilian Butz benefited a lot from discussions with members of Antti Kupiainen's group at Helsinki University, and is grateful for financial support by the Academy of Finland during his stay there.  Sven Bachmann gratefully acknowledges the support
of the National Science Foundation under Grant \#DMS-0757581


%
%
\renewcommand{\theequation}{A-\arabic{equation}}
  \setcounter{equation}{0}  

\bibliography{Ref_DMPK}

\begin{thebibliography}{10}

\bibitem{BdR}
S.~Bachmann and W.~De~Roeck.
\newblock From the {Anderson} model on a strip to the {DMPK} equation and
  random matrix theory.
\newblock {\em J. Stat. Phys.}, 139:541--564, 2010.

\bibitem{beenakkerreview}
C.~W.~J. Beenakker.
\newblock Random-matrix theory of quantum transport.
\newblock {\em Rev. Mod. Phys.}, 69:731--808, 1997.

\bibitem{beenakkerrejaei}
C.~W.~J. Beenakker and B.~Rejaei.
\newblock Nonlogarithmic repulsion of transmission eigenvalues in a disordered
  wire.
\newblock {\em Phys. Rev. Lett.}, 71:3689--3692, 1993.

\bibitem{brouwernonuniversality}
P.W. Brouwer, C.~Mudry, and A.~Furusaki.
\newblock Nonuniversality in quantum wires with off-diagonal disorder: a
  geometric point of view.
\newblock {\em Nuclear Physics B}, 565(3):653 -- 663, 2000.

\bibitem{LB}
M.~B\"uttiker, Y.~Imry, R.~Landauer, and S.~Pinhas.
\newblock Generalized many-channel conductance formula with application to
  small rings.
\newblock {\em Phys. Rev. B}, 31:6207--6215, 1985.

\bibitem{Butzpaper}
M.~Butz.
\newblock {DMPK} eigenvalue process: Well-posedness and derivation from a
  random matrix model.
\newblock In preparation, 2012.

\bibitem{caselledistribution}
M.~Caselle.
\newblock Distribution of transmission eigenvalues in disordered wires.
\newblock {\em Phys. Rev. Lett.}, 74:2776 -- 2779, 1995.

\bibitem{TOETUE}
M.~Caselle.
\newblock A new classification scheme for random matrix theories.
\newblock arxiv.org/pdf/cond-mat/9610017, 1996.

\bibitem{D}
O.~N. Dorokhov.
\newblock Transmission coefficient and the localization length of an electron
  in {N} bound disordered chains.
\newblock {\em JETP Lett.}, 36(7):318--321, 1982.

\bibitem{ErYa12}
L.~{Erd\"o}s and H.T. Yau.
\newblock Universality of local spectral statistics of random matrices.
\newblock {\em Bull. Amer. Math. Soc.}, January 2012.

\bibitem{ethierkurtz}
Stewart~N. Ethier and Thomas~G. Kurtz.
\newblock {\em Invariance Principles and Diffusion Approximations}, pages
  337--364.
\newblock {John Wiley \& Sons, Inc.}, 2008.

\bibitem{huffmann}
A.~{H\"uffmann}.
\newblock Disordered wires from a geometric viewpoint.
\newblock {\em J. Phys. A}, 23(24):5733, 1990.

\bibitem{js}
J.~Jacod and A~Shiryaev.
\newblock {\em Limit Theorems for Stochastic Processes}.
\newblock Springer Verlag, 2nd edition, 2003.

\bibitem{LeSt85}
P.A. Lee and A.D. Stone.
\newblock Universal conductance fluctuations in metals.
\newblock {\em Phys. Rev. Lett.}, 55:1622--1625, 1985.

\bibitem{chalkermacedo}
A.M.S Macedo and J.~Chalker.
\newblock Exact results for the level density and two-point correlation
  function of the transmission-matrix eigenvalues in quasi-one-dimensional
  conductors.
\newblock {\em Phys. Rev. B.}, 49(7):4695 -- 4702, 1994.

\bibitem{MPK}
P.~A. Mello, P.~Pereyra, and N.~Kumar.
\newblock Macroscopic approach to multichannel disordered conductors.
\newblock {\em Annals of Physics}, 181(2):290--317, 1988.

\bibitem{mellostone}
P.A. Mello and A.D. Stone.
\newblock Maximum-entropy model for quantum-mechanical interference effects in
  metallic conductors.
\newblock {\em Phys. Rev. B}, 44:3559--3576, 1991.

\bibitem{RoSc10}
R.~R{\"o}mer and H.~Schulz-Baldes.
\newblock The random phase property and the {Lyapunov} spectrum for disordered
  multi-channel systems.
\newblock {\em J. Stat. Phys.}, 140:122--153, 2010.

\bibitem{SaSc10}
C.~Sadel and H.~Schulz-Baldes.
\newblock Random {Lie} group actions on compact manifolds: A perturbative
  analysis.
\newblock {\em Ann. Prob.}, 38(6):2224--2257, 2010.

\bibitem{viragvalkoexplosion}
B.~Valko and B.~Virag.
\newblock Random {Schr\"odinger} operators on long boxes, noise explosion and
  the {GOE}.
\newblock http://arxiv.org/abs/0912.0097v3, 2009.

\bibitem{zirnbauersuperfourier}
M.~Zirnbauer.
\newblock Super {F}ourier analysis and localization in disordered wires.
\newblock {\em Phys. Rev. Lett.}, 69:1584 -- 1587, 1992.

\end{thebibliography}
\bibliographystyle{plain}

\end{document}